\theoremstyle{plain}
\newtheorem{theorem}{Theorem}[section]
\newtheorem{lemma}[theorem]{Lemma}
\newtheorem{proposition}[theorem]{Proposition}
\theoremstyle{definition}
\newtheorem{definition}[theorem]{Definition}
\theoremstyle{remark}
\newtheorem{remark}[theorem]{Remark}
\numberwithin{equation}{section}
\begin{document}

\title{\textbf{A mathematical proof that\\
the transition to a superconducting state is\\
a second-order phase transition}}

\author{Shuji Watanabe\\
Division of Mathematical Sciences\\
Graduate School of Engineering, Gunma University\\
4-2 Aramaki-machi, Maebashi 371-8510, Japan\\
Email: watanabe@fs.aramaki.gunma-u.ac.jp}

\date{}

\maketitle

\begin{abstract}
We deal with the gap function and the thermodynamical potential in the BCS-Bogoliubov theory of superconductivity, where the gap function is a function of the temperature $T$ only. We show that the squared gap function is of class $C^2$ on the closed interval $[\,0,\,T_c\,]$ and point out some more properties of the gap function. Here, $T_c$ stands for the transition temperature. On the basis of this study we then give, examining the thermodynamical potential, a mathematical proof that the transition to a superconducting state is a second-order phase transition. Furthermore, we obtain a new and more precise form of the gap in the specific heat at constant volume from a mathematical point of view.

\medskip

\noindent Mathematics Subject Classification (2000): 45G10, 82D55

\medskip

\noindent Keywords: Second-order phase transition, superconductivity, gap function, thermodynamical potential
\end{abstract}


\section{Introduction}

Let $\varepsilon>0$ be small enough and let us fix it unless otherwise stated. Let $k_B>0$ and $\omega_D>0$ stand for the Boltzmann constant and for the Debye frequency, respectively. We denote Planck's constant by $h \; (>0)$ and set $\hslash=h/(2\pi)$. Let $\mu>0$ stand for the chemical potential. Let $N(\xi)\geq 0$ stand for the density of states per unit energy at the energy $\xi$ \quad $(-\mu\leq \xi<\infty)$ and let $N_0=N(0)>0$. Here, $N_0$ stands for the density of states per unit energy at the Fermi surface $(\xi=0)$. Let $U_0>0$ be a constant. 

It is well known that superconductivity occurs at temperatures below the temperature $T_c>0$ called the transition temperature. We now define it.
\begin{definition}\label{dfn:temperature}
The transition temperature is the temperature $T_c>0$ satisfying
\[
\frac{1}{\, U_0N_0\,}=\int_{\displaystyle{\varepsilon}}^{\displaystyle{\hslash\omega_D/(2k_BT_c)}} \frac{\,\tanh \eta\,}{\eta}\,d\eta\,.
\]
\end{definition}

Generally speaking, the gap function is a function both of the temperature $T$ and of wave vector. In this paper we however regard the gap function as a function of the temperature $T$ only, and denoted it by $\Delta(T)$ $(\geq 0)$. Such a situation is considered in the BCS-Bogoliubov theory \cite{bcs, bogoliubov}, and is accepted widely in condensed matter physics (see e.g. \cite[(7.118), p.~250]{niwa}, \cite[(11.45), p.~392]{ziman}). See also
\cite{watanabe} and \cite{watanabe2} for related material. The gap function satisfies the following nonlinear integral equation called the gap equation (c.f. \cite{bcs}):
\   For $0\leq T\leq T_c$,
\begin{equation}\label{eq:gapequation}
1=U_0N_0
\int_{\displaystyle{2k_BT_c\,\varepsilon}}^{\displaystyle{\hslash\omega_D}}\frac{1}{\,\sqrt{\,\xi^2+f(T)\,}\,}
\tanh\frac{\, \sqrt{\,\xi^2+f(T)\,}\,}{2k_BT}\,d\xi.
\end{equation}
Here, for later convenience, the squared gap function is denoted by $f$, i.e., $f(T)=\Delta(T)^2$.

\begin{remark}
We introduce the cutoff $\varepsilon$ in Definition \ref{dfn:temperature} and in the gap equation (\ref{eq:gapequation}). When $\varepsilon=0$, Definition \ref{dfn:temperature} and the gap equation (\ref{eq:gapequation}) reduce to those in the BCS-Bogoliubov theory \cite{bcs, bogoliubov}. Furthermore, when $\varepsilon=0$, the thermodynamical potential $\Omega$ in Definition \ref{dfn:thpo} below reduces to that in the BCS-Bogoliubov theory (see also (\ref{eq:omegan}) and (\ref{eq:delta}) below). See e.g. Niwa \cite[sec.~7.7.3, p.~255]{niwa}.
\end{remark}

The gap equation (\ref{eq:gapequation}) is a simplified one, and the gap equation with a more general potential is studied extensively. Odeh \cite{odeh} and Billard and Fano \cite{billardfano} established the existence and uniqueness of the positive solution to the gap equation with a more general potential in the case $T=0$. In the case $T\geq 0$, Vansevenant \cite{vansevesant} and Yang \cite{yang} determined the transition temperature and showed that there is a unique positive solution to the gap equation with a more general potential. Recently Hainzl, Hamza, Seiringer and Solovej \cite{hhss}, and Hainzl and Seiringer \cite{haizlseiringer} proved that the existence of a positive solution to the gap equation with a more general potential is equivalent to the existence of a negative eigenvalue of a certain linear operator to show the existence of a transition temperature.

Let $f(T)$ be as in (\ref{eq:gapequation}) and set
\begin{eqnarray}\nonumber
\quad \qquad \Omega_S(T)&=&\Omega_N(T)+\delta(T),\\ \label{eq:omegan}
\quad \qquad \Omega_N(T)&=&-2N_0\int_{\displaystyle{2k_BT_c\,\varepsilon}}^{\displaystyle{\hslash\omega_D}} \xi\,d\xi-4N_0k_BT\int_{\displaystyle{2k_BT_c\,\varepsilon}}^{\displaystyle{\hslash\omega_D}}
 \ln\left( 1+e^{\displaystyle{-\xi/(k_BT)}} \right)\,d\xi \\ \nonumber
& &+V(T),\qquad T>0,\\ \label{eq:delta}
\delta(T)&=&\frac{\, f(T)\,}{U_0}-2N_0
\int_{\displaystyle{2k_BT_c\,\varepsilon}}^{\displaystyle{\hslash\omega_D}} \left\{ \sqrt{\xi^2+f(T)}-\xi\right\}\,d\xi \\ \nonumber
& &-4N_0k_BT
\int_{\displaystyle{2k_BT_c\,\varepsilon}}^{\displaystyle{\hslash\omega_D}} \ln \frac{1+e^{-\displaystyle{\sqrt{\xi^2+f(T)}/(k_BT)}}}{1+e^{-\displaystyle{\xi/(k_BT)}}} \,d\xi,\quad 0<T\leq T_c,\\ \label{eq:v}
\,\quad V(T)&=&2
 \int_{\displaystyle{-\mu}}^{\displaystyle{-\hslash\omega_D}}
 \xi\, N(\xi)\,d\xi
 -2k_BT\int_{\displaystyle{-\mu}}^{\displaystyle{-\hslash\omega_D}}
 N(\xi)\ln\left( 1+e^{\displaystyle{\,\xi/(k_BT)}} \right)\,d\xi
 \\ \nonumber
& &-2k_BT\int_{\displaystyle{\hslash\omega_D}}^{\infty} N(\xi)
 \ln\left( 1+e^{\displaystyle{-\xi/(k_BT)}} \right)\,d\xi,\qquad T>0.
\end{eqnarray}

\begin{remark}\label{rmk:nxi}
Since $N(\xi)=O(\sqrt{\xi})$ as $\xi \to \infty$, the integral on the right side of (\ref{eq:v})
\[
\int_{\displaystyle{\hslash\omega_D}}^{\infty} N(\xi)
 \ln\left( 1+e^{\displaystyle{-\xi/(k_BT)}} \right)\,d\xi
\]
is well defined for $T>0$.
\end{remark}

\begin{definition}\label{dfn:thpo}
Let $\Omega_S(T)$ and $\Omega_N(T)$ be as above. The thermodynamical potential $\Omega$ is defined by
\[
\Omega(T)=\left\{ \begin{array}{ll}\displaystyle{
 \Omega_S(T)} \qquad &(0<T\leq T_c),\\
\noalign{\vskip0.3cm} \displaystyle{
 \Omega_N(T)} \qquad &(T>T_c).
\end{array}\right.
\]
\end{definition}

\begin{remark}
Generally speaking, the thermodynamical potential $\Omega$ is a function of the temperature $T$, the chemical potential $\mu$ and the volume of our physical system. Fixing the values of $\mu$ and of the volume of our physical system, we deal with the dependence of $\Omega$ on the temperature $T$ only.
\end{remark}

\begin{remark}
Hainzl, Hamza, Seiringer and Solovej \cite{hhss}, and Hainzl and Seiringer \cite{haizlseiringer} studied the gap equation with a more general potential examining the thermodynamic pressure.
\end{remark}

\begin{definition}
We say that the transition to a superconducting state
at the transition temperature $T_c$ is a second-order phase transition
if the following conditions are fulfilled:

{\rm (a)}\quad The thermodynamical potential $\Omega$, regarded as a function of $T$, is of class $C^1$ on $(0,\,\infty)$.

{\rm (b)}\quad The second-order derivative
$\left( \partial^2\Omega/\partial T^2\right)$ is continuous on
$(0,\,\infty) \setminus \{ T_c\}$ and is discontinuous at $T=T_c$.
\end{definition}

\begin{remark}
Condition (a) implies that the entropy $\displaystyle{S=
-\left( \partial\Omega/\partial T\right)}$ is continuous on
$(0,\,\infty)$ and that, as a result, no latent heat is observed
at $T=T_c$. On the other hand, (b) implies that the specific heat
at constant volume, $\displaystyle{C_V=-T\left( \partial^2\Omega/\partial T^2\right)}$, is discontinuous at $T=T_c$. See Proposition \ref{prp:specificheat} below, which gives a new and more precise form of the gap $\Delta C_V$ in the specific heat at constant volume at $T=T_c$ from a mathematical point of view.
\end{remark}

From a physical point of view, it is pointed out that the transition from a normal state to a superconducting state is a second-order phase transition. But a mathematical proof of this statement has not been given yet as far as we know. In this paper we first show that there is a unique solution: $T \mapsto f(T)$ of class $C^2$ on the closed interval $[\,0,\,T_c\,]$ to the gap equation (\ref{eq:gapequation}) and point out some more properties of the gap function. Examining the thermodynamical potential $\Omega$, we then give a mathematical proof that the transition to a superconducting state at the transition temperature $T_c$ is a second-order phase transition. Furthermore, we obtain a new and more precise form of the gap in the specific heat at constant volume from a mathematical point of view.

The paper proceeds as follows. In section 2 we state our main results without proof. In sections 3 and 4 we study some properties of the function $F$ defined by (\ref{eq:functionF}) below. On the basis of this study, in sections 5 and 6, we prove our main results in a sequence of lemmas.

\section{Main results}

Let
\[
h(T,\,Y,\,\xi)=\left\{ \begin{array}{ll}\displaystyle{
 \frac{1}{\,\sqrt{\,\xi^2+Y\,}\,}
 \tanh \frac{\, \sqrt{\,\xi^2+Y\,}\,}{2k_BT}
 } \quad &(0<T\leq T_c\,,\quad Y\geq 0),\\
\noalign{\vskip0.3cm} \displaystyle{
 \frac{1}{\,\sqrt{\,\xi^2+Y\,}\,}
 } \quad &(T=0, \quad Y>0)
\end{array}\right.
\]
and set
\begin{equation}\label{eq:functionF}
F(T,\,Y)=\int_{2k_BT_c\,\varepsilon}^{\hslash\omega_D}
h(T,\,Y,\,\xi)\,d\xi-\frac{1}{\,U_0N_0\,}\,.
\end{equation}
Set also
\begin{equation}\label{eq:delta0}
\Delta_0=\frac{\hslash\omega_D}{\,\sinh\frac{1}{\,U_0N_0\,}\,}, \quad
\Delta=\frac{\,
\sqrt{\left\{\hslash\omega_D-2k_BT_c\,\varepsilon\, e^{1/(U_0N_0)}\right\}
\left\{\hslash\omega_D-2k_BT_c\,\varepsilon\, e^{-1/(U_0N_0)}\right\}
}\,}{\,\sinh\frac{1}{\,U_0N_0\,}\,}.
\end{equation}
Since $\varepsilon>0$ is small enough, it follows that $\Delta_0>\Delta$.

We consider the function $F$ on the following domain $W\subset \mathbb{R}^2$:
\[
W=W_1\cup W_2\cup W_3\cup W_4\,,
\]
where
\begin{eqnarray}\nonumber
W_1&=&\left\{ (T,\,Y)\in\mathbb{R}^2:\; 0<T<T_c\,,\; 0<Y<2\,\Delta_0^2 \right\},\\ \nonumber
W_2&=&\left\{ (0,\,Y)\in\mathbb{R}^2:\; 0<Y<2\,\Delta_0^2 \right\},\\
 \nonumber
W_3&=&\left\{ (T,\,0)\in\mathbb{R}^2:\; 0<T\leq T_c \right\},\\ \nonumber
W_4&=&\left\{ (T_c\,,\,Y)\in\mathbb{R}^2:\; 0<Y<2\,\Delta_0^2 \right\}.
\end{eqnarray}

\begin{remark}
The gap equation (\ref{eq:gapequation}) is rewritten as $F(T,\,Y)=0$, where $Y$ corresponds to $f(T)$ $\left( =\Delta(T)^2\right)$.
\end{remark}

Let $g$ be given by
\begin{equation}\label{eq:fng}
g(\eta)= \left\{ \begin{array}{ll}\displaystyle{
\frac{1}{\,\eta^2\,}\left( \frac{1}{\,\cosh^2\eta \,}
 -\frac{\,\tanh\eta\,}{\eta}\right) } \qquad &(\eta>0),\\
\noalign{\vskip0.3cm} \displaystyle{
-\frac{\,2\,}{\,3\,} } &(\eta=0).
\end{array}\right.
\end{equation}
Note that $g(\eta)<0$.

Our main results are the following.
\begin{proposition}\label{prp:gapfunction}
Let $F$ be as in (\ref{eq:functionF}) and $\Delta$ as in (\ref{eq:delta0}). Then there is a unique solution: $T \mapsto Y=f(T)$ of class $C^2$ on the closed interval $[\,0,\,T_c\,]$ to the gap equation $F(T,\,Y)=0$ such that the function $f$ satisfies $f(0)=\Delta^2$ and $f(T_c)=0$, and is monotonically decreasing on $[0,\,T_c]$:
\[
f(0)=\Delta^2>f(T_1)>f(T_2)>f(T_c)=0, \qquad 0<T_1<T_2<T_c\,.
\]
Furthermore the value of the derivative $f'$ at $T=T_c$ is given by
\[
f'(T_c)=8\,k_B^2T_c\,\frac{\,\displaystyle{
\int_{\displaystyle{\varepsilon}}^{\displaystyle{\hslash\omega_D/(2k_BT_c)}} \frac{d\eta}{\,\cosh^2\eta\,}
}\,}{\,\displaystyle{
\int_{\displaystyle{\varepsilon}}^{\displaystyle{\hslash\omega_D/(2k_BT_c)}} g(\eta)\,d\eta
}\,}<0\,.
\]
\end{proposition}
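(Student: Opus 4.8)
The gap equation is the relation $F(T,\,Y)=0$ with $Y=f(T)$, so the natural tool is the implicit function theorem applied to $F$. The plan is to establish two structural facts about $F$ on $W$ and then read everything off from them: first, that $F$ is of class $C^2$ on $W$, so that differentiation under the integral sign in (\ref{eq:functionF}) is legitimate and may be iterated; second, that the partial derivative $\partial F/\partial Y$ is strictly negative throughout $W$. The latter is where the function $g$ of (\ref{eq:fng}) enters: with the substitution $\eta=\sqrt{\xi^2+Y}/(2k_BT)$ a direct computation shows that $\partial h/\partial Y$ is a positive multiple of $g(\eta)$, and since $g<0$ one gets $\partial F/\partial Y<0$ on all of $W$, including the closed pieces $W_3$ and $W_4$ where $Y=0$ or $T=T_c$. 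I would take these two facts as the output of sections 3 and 4 and not reprove them here.

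Granting them, uniqueness is immediate: for each fixed $T$ the map $Y\mapsto F(T,\,Y)$ is strictly decreasing, hence has at most one zero. The boundary values are then pinned down explicitly. At $T=T_c$, $Y=0$ the integral in (\ref{eq:functionF}) becomes $\int_{\varepsilon}^{\hslash\omega_D/(2k_BT_c)}(\tanh\eta)/\eta\,d\eta$ after the substitution $\eta=\xi/(2k_BT_c)$, which equals $1/(U_0N_0)$ by Definition \ref{dfn:temperature}; hence $F(T_c,\,0)=0$ and $f(T_c)=0$. At $T=0$ the integral is elementary (an inverse hyperbolic sine), and solving $F(0,\,Y)=0$ for $Y$ reduces to a single hyperbolic equation whose solution is exactly $Y=\Delta^2$ as in (\ref{eq:delta0}); hence $f(0)=\Delta^2$. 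For existence at intermediate $T$, I would show $F(T,\,0)>0$ for $0<T<T_c$: writing $\Phi(T)=\int_{T_c\varepsilon/T}^{\hslash\omega_D/(2k_BT)}(\tanh\eta)/\eta\,d\eta$, one checks $\Phi'(T)=-\frac{1}{T}\bigl(\tanh\frac{\hslash\omega_D}{2k_BT}-\tanh\frac{T_c\varepsilon}{T}\bigr)<0$, so $F(T,\,0)=\Phi(T)-\Phi(T_c)>0$; combining this with $F(T,\,Y)<0$ for $Y$ near the upper end $2\Delta_0^2$, the intermediate value theorem and strict monotonicity give a unique zero $f(T)\in(0,\,2\Delta_0^2)$, and the implicit function theorem upgrades $T\mapsto f(T)$ to a $C^2$ function.

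Monotonicity follows from implicit differentiation, $f'(T)=-\,(\partial F/\partial T)/(\partial F/\partial Y)$. Since $\partial F/\partial Y<0$, it suffices to check $\partial F/\partial T<0$, which holds because the limits of integration in (\ref{eq:functionF}) do not depend on $T$ while the integrand $h$ is strictly decreasing in $T$ (the factor $\tanh(\sqrt{\xi^2+Y}/(2k_BT))$ decreases as $T$ increases). Thus $f'<0$ on $(0,\,T_c)$, giving the stated chain of strict inequalities. The value $f'(T_c)$ then comes from evaluating the same quotient at the corner $(T_c,\,0)$: with $\eta=\xi/(2k_BT_c)$ one finds $\partial F/\partial T$ at $(T_c,\,0)$ equal to $-\frac{1}{T_c}\int_{\varepsilon}^{\hslash\omega_D/(2k_BT_c)}d\eta/\cosh^2\eta$ and $\partial F/\partial Y$ at $(T_c,\,0)$ equal to $\frac{1}{8k_B^2T_c^2}\int_{\varepsilon}^{\hslash\omega_D/(2k_BT_c)}g(\eta)\,d\eta$; forming the quotient and tracking the constants $k_B,T_c$ yields precisely the displayed formula, and its sign is negative because the numerator integral is positive while $g<0$ makes the denominator integral negative.

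The main obstacle is not any single computation but the regularity claim up to the boundary: the solution curve meets the edge of $W$ at the corner $(T_c,\,0)$, which is exactly where $h$ is given by its $\tanh$-branch evaluated at $Y=0$, so one must justify that $F$, $\partial F/\partial T$, $\partial F/\partial Y$ and the second-order derivatives all extend continuously to $W_3\cup W_4$ and that $\partial F/\partial Y$ stays bounded away from $0$ there. This hinges on the fact that $g$ extends continuously to $\eta=0$ with $g(0)=-2/3$, so no integrand blows up as $Y\to0$; establishing the uniform, integrable bounds needed to differentiate under the integral sign near this corner—which I expect is the technical content of sections 3 and 4—is the delicate part that makes the $C^2$ conclusion valid on the closed interval $[\,0,\,T_c\,]$ rather than merely on the open interval.
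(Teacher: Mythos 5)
Most of your proposal is sound and coincides with the paper's own route: existence and uniqueness via strict monotonicity of $Y\mapsto F(T,\,Y)$ plus the intermediate value theorem (your direct verification that $F(T,\,0)>0$ through $\Phi'(T)<0$ replaces the paper's appeal to $\partial F/\partial T<0$ along $W_3$, and is fine), the identification $f(0)=\Delta^2$ and $f(T_c)=0$, monotonic decrease from $f'=-F_T/F_Y$ with both partials negative (Lemmas \ref{lm:FTFYW} and \ref{lm:FTFYminus}), and the evaluation of $f'(T_c)$ with its sign; your constants are correct and this all matches Lemmas \ref{lm:existence}--\ref{lm:derivative}.

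The genuine gap is the $C^2$ statement on the \emph{closed} interval, which is the real content of the proposition. You rest it on two things: that $F$ is of class $C^2$ on all of $W$ (deferred to ``sections 3 and 4''), and that ``the implicit function theorem upgrades $T\mapsto f(T)$ to a $C^2$ function.'' Neither is available. The paper proves only that $F$ is $C^1$ on $W$ (Lemma \ref{lm:FC1onW}) and $C^2$ on the open piece $W_1$ (Lemma \ref{lm:FC2W1}); it nowhere establishes second-order regularity of $F$ up to $W_2\cup W_3\cup W_4$. And even granting such boundary regularity, the standard implicit function theorem cannot be invoked at the two solution points $(0,\,\Delta^2)$ and $(T_c\,,\,0)$, because they lie on the \emph{boundary} of $W$ --- the paper makes exactly this point in the remark opening Section 5 (``one can not apply the implicit function theorem in its present form''). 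What the paper does instead, and what is missing from your proposal, is the by-hand endpoint analysis of Lemma \ref{lm:classc2}: one-sided difference quotients for $f'$ at $T=T_c$, controlled by mean value estimates $g(\eta)-g(\eta_0)=(\eta-\eta_0)g'(\eta_1)$ together with $g'(\eta)=-\eta\,G(\eta)$ from Lemma \ref{lm:cgproperty}; and at $T=0$ a separate argument in which the exponential decay of $1/\cosh^2\eta$ must beat the $1/T^3$ prefactors across the switch of $h$ to its $T=0$ branch, yielding $f''(0)=0$. Note also that your discussion of the ``delicate part'' concerns only the corner $(T_c\,,\,0)$, i.e. $W_3\cup W_4$; the endpoint $T=0$ poses a difficulty of a different nature (prefactor blow-up rather than integrand behaviour near $\eta=0$) and is not addressed at all. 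To repair your route you would have to either (i) actually prove continuous extension of $F_{TT}$, $F_{TY}$, $F_{YY}$ to the relevant boundary points and then use the elementary fact that if $f'$ is continuous on $[0,\,T_c]$, $f''$ exists on $(0,\,T_c)$ and has one-sided limits at the endpoints, then $f'$ is one-sidedly differentiable there with those limits as values; or (ii) extend $F$ to an open neighborhood of $(T_c\,,\,0)$ (the integrand makes sense for slightly negative $Y$ and for $T>T_c$) and apply the usual theorem --- but at $T=0$ no such extension exists, so a one-sided argument of the paper's type is unavoidable there.
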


\begin{theorem}\label{thm:phasetransition}
The transition to a superconducting state at the transition temperature $T_c$ is a second-order phase transition, and the following relation holds at the transition temperature $T_c$:
\[
\lim_{T\uparrow T_c} \frac{\,\partial^2\Omega\,}{\,\partial T^2\,}(T)-
\lim_{T\downarrow T_c} \frac{\,\partial^2\Omega\,}{\,\partial T^2\,}(T)
=\frac{\,2N_0f'(T_c)\,}{T_c} \left(
\frac{1}{\,1+e^{\displaystyle{2\varepsilon}} \,}-
\frac{1}{\,1+e^{\displaystyle{\hslash\omega_D/(k_BT_c)}} \,} \right),
\]
where $f'(T_c)$ is given by Proposition \ref{prp:gapfunction}.
\end{theorem}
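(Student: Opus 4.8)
The plan is to exploit the decomposition $\Omega_S(T)=\Omega_N(T)+\delta(T)$ together with the fact, supplied by Proposition \ref{prp:gapfunction}, that $f(T_c)=0$. Since $\Omega_N$ and all its $T$-derivatives extend smoothly across $T=T_c$ (its integrands depend on $T$ only through smooth Fermi factors, with no $\xi$-singularity on $[\,2k_BT_c\varepsilon,\,\hslash\omega_D\,]$), the whole analysis near $T_c$ reduces to the behaviour of $\delta$ as $T\uparrow T_c$. In particular, setting $Y=f(T_c)=0$ collapses every integrand in $\delta$ (because $\sqrt{\xi^2+0}=\xi$), so $\delta(T_c)=0$ and $\Omega$ is continuous at $T_c$; moreover the jump in the second derivative is simply
\[
\lim_{T\uparrow T_c}\frac{\partial^2\Omega}{\partial T^2}(T)-\lim_{T\downarrow T_c}\frac{\partial^2\Omega}{\partial T^2}(T)=\lim_{T\uparrow T_c}\delta''(T),
\]
since the $\Omega_N$ contributions cancel. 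Thus the theorem rests entirely on computing $\delta'(T_c)$ and $\delta''(T_c)$.

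First I would introduce the two-variable function $\tilde\delta(T,Y)$ obtained from $\delta$ by replacing $f(T)$ with an independent variable $Y$, abbreviating $u=\sqrt{\xi^2+Y}$, so that $\delta(T)=\tilde\delta(T,f(T))$. A direct differentiation yields the decisive identity $\partial\tilde\delta/\partial Y=-N_0\,F(T,Y)$, with $F$ as in (\ref{eq:functionF}); the algebra is exactly the manipulation $\frac{1-e^{-x}}{1+e^{-x}}=\tanh\frac{x}{2}$ applied to the Fermi factors. Because $(T,f(T))$ solves the gap equation $F=0$, this partial derivative vanishes along the solution curve, so the chain rule gives $\delta'(T)=(\partial\tilde\delta/\partial T)(T,f(T))$ with the $f'(T)$ term annihilated -- the envelope phenomenon built into the BCS free energy. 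Writing $G(T,Y)=(\partial\tilde\delta/\partial T)(T,Y)$, one checks that $G(T,0)\equiv 0$ (again because $Y=0$ forces $u=\xi$), whence $\delta'(T_c)=G(T_c,0)=0$. This establishes condition (a): $\partial\Omega/\partial T$ is continuous at $T_c$, while $C^1$-smoothness on the rest of $(0,\infty)$ is immediate from $f\in C^2$ and the smoothness of the integrands.

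For condition (b) and the explicit jump, differentiate once more: $\delta''(T)=(\partial_T G)(T,f(T))+(\partial_Y G)(T,f(T))\,f'(T)$. Since $G(T,0)\equiv 0$ we have $(\partial_T G)(T_c,0)=0$, so at $T_c$ only the second term survives and $\delta''(T_c)=(\partial_Y G)(T_c,0)\,f'(T_c)$. Computing $\partial_Y G$ at $Y=0$ is the crux: differentiating the logarithmic term and the $u/(1+e^{u/(k_BT)})$ term and using $\partial_Y u\big|_{Y=0}=1/(2\xi)$ produces two singular contributions proportional to $1/\xi$ that cancel exactly, leaving the regular expression
\[
(\partial_Y G)(T_c,0)=\frac{2N_0}{k_BT_c^2}\int_{2k_BT_c\varepsilon}^{\hslash\omega_D}\frac{e^{\xi/(k_BT_c)}}{\bigl(1+e^{\xi/(k_BT_c)}\bigr)^2}\,d\xi.
\]
The integrand equals $-k_BT_c\,\frac{d}{d\xi}\bigl(1+e^{\xi/(k_BT_c)}\bigr)^{-1}$, so the integral evaluates to $k_BT_c\bigl(\frac{1}{1+e^{2\varepsilon}}-\frac{1}{1+e^{\hslash\omega_D/(k_BT_c)}}\bigr)$, and substituting gives precisely the stated formula for $\delta''(T_c)=\lim_{T\uparrow T_c}\delta''(T)$. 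Finally, $f'(T_c)<0$ by Proposition \ref{prp:gapfunction} and the bracket is strictly positive (as $2\varepsilon<\hslash\omega_D/(k_BT_c)$), so the jump is nonzero: $\partial^2\Omega/\partial T^2$ is genuinely discontinuous at $T_c$, which is condition (b).

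I expect the main obstacle to be analytic rather than algebraic: justifying differentiation under the integral sign twice and the interchange $\lim_{T\uparrow T_c}\delta''(T)=\delta''(T_c)$, i.e. the continuity of $\delta''$ up to the corner $Y=0$, $T=T_c$. This requires locally uniform domination of the integrands and their $\partial_T$, $\partial_Y$ derivatives over the domain $W$; the regularity $f\in C^2$ on $[\,0,\,T_c\,]$ from Proposition \ref{prp:gapfunction} is exactly what legitimizes the chain-rule steps, while the cutoff $2k_BT_c\varepsilon>0$ keeps the $1/\xi$ factors bounded and lets the singular terms cancel cleanly rather than merely in a principal-value sense.
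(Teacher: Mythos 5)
Your proposal is correct and follows essentially the same route as the paper: decompose $\Omega_S=\Omega_N+\delta$, use $f(T_c)=0$ and the gap equation (which annihilates the $f'(T)$ term in $\delta'$) to get $\delta(T_c)=\delta'(T_c)=0$, identify the jump in $\partial^2\Omega/\partial T^2$ with $\delta''(T_c)$, and evaluate the resulting Fermi-factor integral to the stated expression. Your two-variable envelope packaging, $\partial\tilde\delta/\partial Y=-N_0F$ together with the shortcut $\delta''(T_c)=(\partial_Y G)(T_c,0)\,f'(T_c)$ from $G(T,0)\equiv 0$, is just a cleaner reorganization of what the paper's Lemma \ref{lm:delta} does by direct differentiation under the integral sign (computing $\delta''(T)$ explicitly and then setting $T=T_c$, where the $\xi^2$ terms cancel), so the two arguments coincide in substance.
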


Setting $\varepsilon=0$ in the results of Proposition \ref{prp:gapfunction} and Theorem \ref{thm:phasetransition} immediately yields the following.
\begin{proposition}\label{prp:specificheat}
Let $T_c$ satisfy
\[
\frac{1}{\, U_0N_0\,}=\int_0^{\displaystyle{\hslash\omega_D/(2k_BT_c)}} \frac{\,\tanh \eta\,}{\eta}\,d\eta
\]
and let $f'(T_c)$ be given by
\[
f'(T_c)=8\,k_B^2T_c\,\frac{\,\displaystyle{
\int_0^{\displaystyle{\hslash\omega_D/(2k_BT_c)}}
 \frac{d\eta}{\,\cosh^2\eta\,}
}\,}{\,\displaystyle{
\int_0^{\displaystyle{\hslash\omega_D/(2k_BT_c)}} g(\eta)\,d\eta
}\,}<0\,.
\]
Then the gap $\Delta C_V$ in the specific heat at constant volume, $\displaystyle{C_V=-T\left( \partial^2\Omega/\partial T^2\right)}$, at the transition temperature $T_c$ is given by the form
\begin{equation}\label{eq:specificheat}
\Delta C_V=-N_0f'(T_c)\tanh \frac{\hslash\omega_D}{\,2k_BT_c\,}>0.
\end{equation}
\end{proposition}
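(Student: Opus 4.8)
The plan is to read off $\Delta C_V$ directly from the jump of $\partial^2\Omega/\partial T^2$ supplied by Theorem \ref{thm:phasetransition}, specialised to $\varepsilon=0$, and then to reduce the resulting expression to the stated closed form by a single hyperbolic identity. First I would recall that the specific heat is $C_V=-T\left(\partial^2\Omega/\partial T^2\right)$, so that the gap at the transition temperature is
\[
\Delta C_V=-T_c\left(\lim_{T\uparrow T_c}\frac{\partial^2\Omega}{\partial T^2}(T)-\lim_{T\downarrow T_c}\frac{\partial^2\Omega}{\partial T^2}(T)\right).
\]
Both one-sided limits exist by Theorem \ref{thm:phasetransition}, which already establishes that $\partial^2\Omega/\partial T^2$ is continuous away from $T_c$ and jumps there, so the right-hand side is well defined.

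Next I would insert the explicit jump formula from Theorem \ref{thm:phasetransition}. With the cutoff set to $\varepsilon=0$ the factor $1/(1+e^{2\varepsilon})$ becomes $1/(1+e^{0})=1/2$, which gives
\[
\Delta C_V=-2N_0f'(T_c)\left(\frac{1}{2}-\frac{1}{\,1+e^{\hslash\omega_D/(k_BT_c)}\,}\right),
\]
where $f'(T_c)$ is now the $\varepsilon=0$ expression stated in the proposition. The only remaining computation is to simplify the bracket. Writing $x=\hslash\omega_D/(k_BT_c)$, I would use
\[
\frac{1}{2}-\frac{1}{\,1+e^{x}\,}=\frac{e^{x}-1}{\,2\,(1+e^{x})\,}=\frac{1}{2}\,\frac{e^{x/2}-e^{-x/2}}{\,e^{x/2}+e^{-x/2}\,}=\frac{1}{2}\tanh\frac{x}{2},
\]
which turns the bracket into $\frac{1}{2}\tanh\frac{\hslash\omega_D}{2k_BT_c}$ and yields exactly (\ref{eq:specificheat}). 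The strict positivity $\Delta C_V>0$ then follows from $f'(T_c)<0$ (the $\varepsilon=0$ formula quoted in the statement, inherited from Proposition \ref{prp:gapfunction}) together with $\tanh\frac{\hslash\omega_D}{2k_BT_c}>0$.

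Since the substantive analytic work, namely the existence and $C^2$ regularity of $f$, the sign and value of $f'(T_c)$, and the very existence of the one-sided limits of $\partial^2\Omega/\partial T^2$, is already carried out in Proposition \ref{prp:gapfunction} and Theorem \ref{thm:phasetransition}, this argument is essentially algebraic. The one point deserving care, and the only place I expect any real obstacle, is the legitimacy of the specialisation $\varepsilon=0$: strictly speaking $T_c$, $f'(T_c)$ and the integrals defining $\Omega$ all depend on $\varepsilon$, so one must confirm that the $\varepsilon=0$ versions of the $T_c$-equation and of the $f'(T_c)$-formula quoted in the proposition are the genuine $\varepsilon\downarrow 0$ limits and that the jump of the limit coincides with the limit of the jump. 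I would dispatch this using the continuity-in-$\varepsilon$ and dominated-convergence estimates already available for $F$ and its derivatives; the closing hyperbolic simplification is then immediate.
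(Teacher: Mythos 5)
Your proposal is correct and takes essentially the same route as the paper, whose entire proof is the remark that setting $\varepsilon=0$ in Proposition \ref{prp:gapfunction} and Theorem \ref{thm:phasetransition} immediately yields the result; your computation (multiplying the jump by $-T_c$ and using $\tfrac12-\tfrac{1}{1+e^{x}}=\tfrac12\tanh\tfrac{x}{2}$) is exactly the algebra the paper leaves implicit. Your closing concern about rigorously justifying the $\varepsilon\downarrow 0$ specialisation is legitimate but goes beyond the paper, which treats $\varepsilon=0$ as a formal substitution.
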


\begin{remark}
A form similar to (\ref{eq:specificheat}) has already been obtained by a different method in the context of theoretical condensed matter physics, but it is an approximate one. However the form (\ref{eq:specificheat}) is a more precise one obtained in the context of mathematics.
\end{remark}

\section{The first-order partial derivatives of the function $F$}

In this section we deal with the first-order partial derivatives of the
function $F$ and show that $F$ is of class $C^1$ on $W$.

A straightforward calculation gives the following.
\begin{lemma}\label{lm:gproperty}
Let $g$ be as in (\ref{eq:fng}). Then the function $g$ is of class $C^1$
on $[0,\,\infty)$ and satisfies
\[
g(\eta)<0,\qquad g'(0)=0,\qquad
\lim_{\eta\to\infty}g(\eta)=\lim_{\eta\to\infty}g'(\eta)=0.
\]
\end{lemma}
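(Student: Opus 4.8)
The plan is to treat the four assertions in turn, isolating the one delicate point, which is the behavior at $\eta=0$. For the negativity $g(\eta)<0$ on $(0,\infty)$ I would avoid expanding $g$ directly and instead set $\phi(\eta)=\tanh\eta-\eta/\cosh^2\eta$, so that on $(0,\infty)$ the sign of $g$ is opposite to that of $\phi$. Since $\phi(0)=0$ and a short computation gives $\phi'(\eta)=2\eta\sinh\eta/\cosh^3\eta>0$ for $\eta>0$, the function $\phi$ is strictly positive on $(0,\infty)$, which forces $g<0$ there; at the endpoint the value $g(0)=-2/3<0$ is prescribed by definition.

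For the behavior at the origin — continuity, membership in $C^1$, and $g'(0)=0$ — the cleanest route is to observe that the bracketed factor $\cosh^{-2}\eta-\tanh\eta/\eta$ is an even, real-analytic function whose power series begins $-\tfrac{2}{3}\eta^2+\tfrac{8}{15}\eta^4+\cdots$. Dividing by $\eta^2$ removes the apparent singularity and shows that $g$ extends real-analytically across $0$ with $g(\eta)=-\tfrac{2}{3}+\tfrac{8}{15}\eta^2+\cdots$. From this expansion one reads off $\lim_{\eta\downarrow 0}g(\eta)=-2/3=g(0)$, so $g$ is continuous at $0$, and $g'(\eta)=\tfrac{16}{15}\eta+\cdots$, so $g'(0)=0$ and $g'$ is continuous there. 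Combined with the smoothness of $g$ on $(0,\infty)$ as a quotient of smooth functions with nonvanishing denominators, this yields $g\in C^1([0,\infty))$.

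For the limits at infinity I would argue directly from the explicit formulas. Writing $g(\eta)=(\eta^2\cosh^2\eta)^{-1}-\eta^{-3}\tanh\eta$, both terms tend to $0$ (the first exponentially, the second like $\eta^{-3}$), giving $\lim_{\eta\to\infty}g(\eta)=0$. Differentiating gives
\[
g'(\eta)=-\frac{3}{\eta^3\cosh^2\eta}-\frac{2\sinh\eta}{\eta^2\cosh^3\eta}+\frac{3\tanh\eta}{\eta^4},
\]
and each of the three terms again vanishes as $\eta\to\infty$ (the first two exponentially, the last like $\eta^{-4}$), so $\lim_{\eta\to\infty}g'(\eta)=0$.

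The only genuinely delicate step is the $C^1$ regularity at $\eta=0$: one must establish not merely that $g$ has a limit there but that $g'$ is continuous up to the endpoint. The analyticity observation handles this at once. If one prefers to avoid invoking analyticity, the alternative is to compute $g'(\eta)$ for $\eta>0$ from the displayed formula and verify $\lim_{\eta\downarrow 0}g'(\eta)=0$ by the same Taylor expansions, controlling the remainder terms; this is routine, but it is where the bookkeeping is heaviest, so the series argument is preferable.
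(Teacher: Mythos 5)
Your proposal is correct, and all the computations check out: $\phi'(\eta)=2\eta\sinh\eta/\cosh^3\eta$, the expansion $\operatorname{sech}^2\eta-\tanh\eta/\eta=-\tfrac{2}{3}\eta^2+\tfrac{8}{15}\eta^4-\cdots$, and the displayed formula for $g'$ (whose middle term $2\sinh\eta/(\eta^2\cosh^3\eta)$ agrees with $2\tanh\eta/(\eta^2\cosh^2\eta)$) are all right, and together they do establish every assertion of the lemma, including the delicate $C^1$ regularity at the endpoint $\eta=0$.

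There is no real comparison to draw with the paper, because the paper offers no proof at all: the lemma is prefaced only by ``A straightforward calculation gives the following.'' Your write-up is precisely the straightforward calculation the author had in mind, but organized with two efficiencies worth noting. First, the sign of $g$ is settled by the monotonicity of $\phi(\eta)=\tanh\eta-\eta/\cosh^2\eta$ rather than by series manipulation, which gives strict negativity on all of $(0,\infty)$ (not just near $0$ and near $\infty$) with essentially no work. Second, invoking real-analyticity of the even function $\operatorname{sech}^2\eta-\tanh\eta/\eta$ and dividing the Taylor series by $\eta^2$ disposes of continuity of $g$, continuity of $g'$, and $g'(0)=0$ at $\eta=0$ in one stroke; this also shows the stronger fact that $g$ is real-analytic, hence $C^\infty$, on $[0,\infty)$, which is more than the lemma asks and is consistent with the paper's later Lemma 4.1 (your expansion $g'(\eta)=\tfrac{16}{15}\eta+\cdots$ matches $g'(\eta)=-\eta G(\eta)$ with $G(0)=-16/15$).
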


\begin{lemma}\label{lm:FTFYW}
The partial derivatives $\displaystyle{\frac{\,\partial F\,}{\,\partial T\,}}$ and $\displaystyle{\frac{\,\partial F\,}{\,\partial Y\,}}$ exist on $W$, and are given as follows. At $(T,\,Y)\in W \setminus W_2$ ,
\[
\left\{ \begin{array}{ll}\displaystyle{
 \frac{\,\partial F\,}{\,\partial T\,}(T,\,Y)=-\frac{1}{\,2k_BT^2\,}
 \int_{2k_BT_c\,\varepsilon}^{\hslash\omega_D}
 \frac{d\xi}{\,\cosh^2 \displaystyle{
 \frac{\,\sqrt{\xi^2+Y}\,}{\,2k_BT\,} } \,}\,,
 } \qquad & \, \\
\noalign{\vskip0.3cm} \displaystyle{
 \frac{\,\partial F\,}{\,\partial Y\,}(T,\,Y)=\frac{1}{\,2(2k_BT)^3\,}
 \int_{2k_BT_c\,\varepsilon}^{\hslash\omega_D}
 g\left( \frac{\,\sqrt{\xi^2+Y}\,}{\,2k_BT\,} \right)\, d\xi
 } \qquad & \,
\end{array}\right.
\]
and at $(0,\,Y)\in W_2$ ,
\[
\left\{ \begin{array}{ll}\displaystyle{
 \frac{\,\partial F\,}{\,\partial T\,}(0,\,Y)=0,
 } \qquad & \, \\
\noalign{\vskip0.3cm} \displaystyle{
 \frac{\,\partial F\,}{\,\partial Y\,}(0,\,Y)
 =-\frac{1}{\,2\,}\int_{2k_BT_c\,\varepsilon}^{\hslash\omega_D}
 \frac{d\xi}{\,( \sqrt{\xi^2+Y} )^3\,}\,.
 } \qquad & \,
\end{array}\right.
\]
\end{lemma}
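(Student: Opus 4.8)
The plan is to prove the formulas by differentiating under the integral sign on the region $W \setminus W_2$ where $T>0$, and to treat the boundary line $W_2$ (where $T=0$) separately. On $W \setminus W_2$ the integration interval $[\,2k_BT_c\,\varepsilon,\,\hslash\omega_D\,]$ is compact and $\xi^2+Y$ is bounded away from $0$ on it, so the integrand $h(T,\,Y,\,\xi)$ and its partial derivatives in $T$ and $Y$ are continuous and bounded there; hence the Leibniz rule for differentiation under the integral sign applies. Writing $s=\sqrt{\xi^2+Y}$ and differentiating $\tanh\bigl(s/(2k_BT)\bigr)$ in $T$ brings down a factor $-s/(2k_BT^2)$ against $1/\cosh^2\bigl(s/(2k_BT)\bigr)$; the factor $s$ cancels the prefactor $1/s$, and one obtains the stated expression for $\partial F/\partial T$.

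For $\partial F/\partial Y$ on $W \setminus W_2$ I would differentiate $h=s^{-1}\tanh\bigl(s/(2k_BT)\bigr)$ using $\partial s/\partial Y=1/(2s)$. This produces two terms, one from the prefactor $s^{-1}$ and one from the $\tanh$ factor. The only step requiring care is to reorganize their sum into $\frac{1}{2(2k_BT)^3}\,g(\eta)$ with $\eta=s/(2k_BT)$: substituting $s=2k_BT\,\eta$ and factoring out $\eta^{-2}$ recovers exactly the combination $\cosh^{-2}\eta-\eta^{-1}\tanh\eta$ that defines $g$ in (\ref{eq:fng}). Integrating in $\xi$ then gives the claimed formula, and Lemma \ref{lm:gproperty} guarantees the resulting integral is well defined.

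The main obstacle is the boundary line $W_2$, where $h$ is given by the second branch $h(0,\,Y,\,\xi)=(\xi^2+Y)^{-1/2}$, so one cannot simply differentiate under the integral in $T$. For $\partial F/\partial Y(0,\,Y)$ there is no difficulty: $(\xi^2+Y)^{-1/2}$ is smooth in $Y$ with $\xi^2+Y>0$, and differentiation under the integral yields $-\frac12\int (\sqrt{\xi^2+Y})^{-3}\,d\xi$. For $\partial F/\partial T(0,\,Y)$ I would instead evaluate the one-sided difference quotient $\bigl[F(T,\,Y)-F(0,\,Y)\bigr]/T$ as $T\downarrow 0$. Using the identity $\tanh x-1=-2/(e^{2x}+1)$ with $x=s/(2k_BT)$, the integrand equals $s^{-1}\bigl(\tanh(s/(2k_BT))-1\bigr)/T$, which is bounded in absolute value by a multiple of $T^{-1}e^{-s/(k_BT)}$. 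Since $s\geq\sqrt{(2k_BT_c\,\varepsilon)^2+Y}>0$ on the integration interval, this bound tends to $0$ uniformly in $\xi$ as $T\downarrow 0$; interchanging the limit with the integral then gives $\partial F/\partial T(0,\,Y)=0$. This uniform exponential decay, which forces the boundary derivative to vanish, is the crux of the argument.
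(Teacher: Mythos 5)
Your proposal is correct and is essentially the argument the paper intends: the paper presents this lemma as a ``straightforward calculation'' with no written proof, and your computation---Leibniz differentiation under the integral sign on $W\setminus W_2$ (where $\xi^2+Y$ is bounded away from zero and the chain rule reorganizes the $Y$-derivative into $g(\eta)/\bigl(2(2k_BT)^3\bigr)$), together with the one-sided difference quotient and uniform exponential decay $T^{-1}e^{-s/(k_BT)}\to 0$ at the boundary $T=0$---is exactly that calculation carried out correctly.
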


Lemmas \ref{lm:gproperty} and \ref{lm:FTFYW} immediately give the following.
\begin{lemma}\label{lm:FTFYminus}\quad At $(T,\,Y)\in W\setminus W_2$,
\[
\frac{\,\partial F\,}{\,\partial T\,}(T,\,Y)<0,\qquad
\frac{\,\partial F\,}{\,\partial Y\,}(T,\,Y)<0.
\]
\end{lemma}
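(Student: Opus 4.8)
The plan is simply to read off the signs from the explicit expressions supplied by Lemma~\ref{lm:FTFYW}, using that on $W\setminus W_2$ one has $T>0$ (since $W_2$ is precisely the part of $W$ with $T=0$), so every prefactor is finite and has a definite sign. Because $\varepsilon>0$ is small, the integration range $[\,2k_BT_c\,\varepsilon,\,\hslash\omega_D\,]$ is nondegenerate, so each integral below is the integral of a one-signed, pointwise-nonzero integrand over an interval of positive length, and is therefore strictly positive or strictly negative as the case may be.

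For $\partial F/\partial T$ I would note that the integrand $1/\cosh^2\bigl(\sqrt{\xi^2+Y}/(2k_BT)\bigr)$ is strictly positive, so the integral appearing in Lemma~\ref{lm:FTFYW} is strictly positive; since the prefactor $-1/(2k_BT^2)$ is strictly negative (as $k_B>0$ and $T>0$), the product is strictly negative, giving $\partial F/\partial T(T,Y)<0$.

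For $\partial F/\partial Y$ the prefactor $1/\bigl(2(2k_BT)^3\bigr)$ is strictly positive, while by Lemma~\ref{lm:gproperty} the integrand $g\bigl(\sqrt{\xi^2+Y}/(2k_BT)\bigr)$ is strictly negative; hence the integral is strictly negative and so is the product, giving $\partial F/\partial Y(T,Y)<0$. There is no genuine obstacle here: the entire content of the lemma is the sign data already encoded in Lemmas~\ref{lm:gproperty} and~\ref{lm:FTFYW}, and the argument is pure sign-chasing once those formulas are in hand. The only point worth stating explicitly is the reduction to the case $T>0$, which is automatic because $W\setminus W_2$ excludes exactly the zero-temperature slice on which the formulas for the derivatives take a different form.
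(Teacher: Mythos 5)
Your proposal is correct and coincides with the paper's own argument: the paper derives this lemma immediately from the explicit formulas in Lemma~\ref{lm:FTFYW} together with the sign $g(\eta)<0$ from Lemma~\ref{lm:gproperty}, exactly the sign-chasing you carry out. Your added remark that $W\setminus W_2$ forces $T>0$ (so the positive-temperature formulas and nonzero prefactors apply) is a sensible explicit note on a point the paper leaves tacit.
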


We now study the continuity of the functions $F$, $(\partial F/\partial T)$ and $(\partial F/\partial Y)$ on $W$.

\begin{lemma}\label{lm:FC1onW1}
The partial derivatives $\displaystyle{\frac{\,\partial F\,}{\,\partial T\,}}$ and $\displaystyle{\frac{\,\partial F\,}{\,\partial Y\,}}$ are continuous on $W_1$. Consequently, the function $F$ is of class $C^1$ on $W_1$.
\end{lemma}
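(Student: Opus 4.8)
The plan is to work directly from the explicit formulas for the two partial derivatives furnished by Lemma \ref{lm:FTFYW}, which are valid on $W\setminus W_2\supset W_1$, and to reduce the claim to the standard theorem on continuity of an integral depending on parameters over a fixed compact interval. First I would record the two structural facts that make the reduction possible: $W_1$ is an open subset of $\mathbb{R}^2$, and the domain of integration $[\,2k_BT_c\varepsilon,\,\hslash\omega_D\,]$ is a fixed compact interval independent of $(T,Y)$.

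Next I would establish the joint continuity of the two integrands in $(T,Y,\xi)$ on $W_1\times[\,2k_BT_c\varepsilon,\,\hslash\omega_D\,]$. For the $T$-derivative the integrand is $\cosh^{-2}\bigl(\sqrt{\xi^2+Y}/(2k_BT)\bigr)$; the map $(T,Y,\xi)\mapsto\sqrt{\xi^2+Y}/(2k_BT)$ is continuous for $T>0$ and $Y\geq 0$, and $\cosh^2$ never vanishes, so this integrand is continuous and, since $\cosh^2\geq 1$, bounded by $1$. For the $Y$-derivative the integrand is $g\bigl(\sqrt{\xi^2+Y}/(2k_BT)\bigr)$, which is continuous because $g$ is of class $C^1$ on $[0,\infty)$ by Lemma \ref{lm:gproperty}, and is likewise bounded there.

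Then, fixing an arbitrary point $(T_0,Y_0)\in W_1$, I would choose a compact neighborhood $K\subset W_1$ of it, which is possible because $W_1$ is open. On the compact set $K\times[\,2k_BT_c\varepsilon,\,\hslash\omega_D\,]$ each integrand is uniformly continuous and bounded, so the corresponding integral is continuous on $K$, and in particular at $(T_0,Y_0)$; equivalently, one may pass a limit through the integral by dominated convergence, the uniform bound on $K$ serving as the dominating function over the finite interval. Multiplying by the prefactors $-1/(2k_BT^2)$ and $1/\{2(2k_BT)^3\}$, which are continuous in $T$ on $(0,T_c)$ and bounded on $K$ since $T$ stays away from $0$ there, I obtain continuity of $\partial F/\partial T$ and $\partial F/\partial Y$ at $(T_0,Y_0)$. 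As $(T_0,Y_0)$ was arbitrary, both partial derivatives are continuous on $W_1$. Finally, since the two partial derivatives exist by Lemma \ref{lm:FTFYW} and are continuous on the open set $W_1$, the standard criterion from multivariable calculus yields that $F$ is of class $C^1$ on $W_1$.

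The only point requiring care, and the main obstacle, is the interchange of limit and integral near points of $W_1$ where the prefactor $1/T^{2}$ or $1/T^{3}$ is large; this is resolved precisely by the openness of $W_1$, which permits restriction to a compact neighborhood on which $T$ is bounded away from $0$ and the integrands are uniformly bounded, so that compactness of the integration interval makes the dominating bound trivially integrable. No singularity can arise from the integrands themselves, since $\cosh^2\geq 1$ and $g$ is bounded and continuous on $[0,\infty)$ by Lemma \ref{lm:gproperty}.
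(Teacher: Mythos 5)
Your proof is correct, and it takes a genuinely different (softer) route than the paper. The paper proves continuity of the two integrals $I_1$ and $I_2$ at an arbitrary $(T_0,Y_0)\in W_1$ by \emph{explicit quantitative estimates}: it writes the difference of integrands as a product, applies mean-value-theorem bounds to $\cosh$ and to $g$ (using $\max_{\eta\geq 0}|g'(\eta)|$, finite by Lemma \ref{lm:gproperty}), and arrives at local Lipschitz-type bounds of the form $C\bigl(|T-T_0|+|Y-Y_0|\bigr)$ with constants depending on $T_0$ and $\sqrt{Y_0}$ (the restriction $T>T_0/2$ playing the same role as your compact neighborhood $K$). You instead invoke the general principle that an integral of a jointly continuous, uniformly bounded integrand over a fixed compact interval is continuous in the parameters, justified by uniform continuity on $K\times[\,2k_BT_c\varepsilon,\,\hslash\omega_D\,]$ or by dominated convergence with a constant dominating function; you then factor out the continuous prefactors $-1/(2k_BT^2)$ and $1/\{2(2k_BT)^3\}$, which the paper keeps inside its estimates. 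What the paper's approach buys is self-containedness and explicit moduli of continuity, consistent with the style of its other continuity lemmas (e.g.\ Lemmas \ref{lm:FC1onW} and \ref{lm:FC2W1}, which reuse the same template); what your approach buys is brevity and generality --- it needs only continuity and boundedness of $g$ rather than a bound on $g'$, and it would apply verbatim to any jointly continuous integrand. Both arguments correctly exploit the two structural facts that make the lemma easy: $W_1$ is open (so $T$ stays away from $0$ locally) and the integration interval is compact and independent of $(T,Y)$.
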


\begin{proof}
It is enough to show that the functions: $(T,\,Y)\mapsto I_1(T,\,Y)$ and $(T,\,Y)\mapsto I_2(T,\,Y)$ (see Lemma \ref{lm:FTFYW}) are continuous at $(T_0,\,Y_0)\in W_1$. Here,
\begin{equation}\label{eq:I1I2}
I_1(T,\,Y)=\int_{2k_BT_c\,\varepsilon}^{\hslash\omega_D}
 \frac{d\xi}{\,\cosh^2\eta\,}\,,\quad
I_2(T,\,Y)=\int_{2k_BT_c\,\varepsilon}^{\hslash\omega_D}
g(\eta)\, d\xi\,,\quad \eta=\frac{\,\sqrt{\,\xi^2+Y\,}\,}{2k_BT}\,.
\end{equation}
Set $\displaystyle{\eta_0=\frac{\,\sqrt{\,\xi^2+Y_0\,}\,}{2k_BT_0}}$.\quad
Since $(T,\,Y)\in W_1$ is close to $(T_0,\,Y_0)\in W_1$, it follows that
$T>T_0/2$. Then
\begin{eqnarray}\nonumber
& &\left| I_1(T,\,Y)-I_1(T_0,\,Y_0)\right| \\ \nonumber
&\leq& \int_{2k_BT_c\,\varepsilon}^{\hslash\omega_D}
 \left|\left( \frac{1}{\,\cosh\eta\,}+\frac{1}{\,\cosh\eta_0\,} \right)
 \frac{\,\cosh\eta-\cosh\eta_0\,}{\,\cosh\eta\,\cosh\eta_0\,} \right|
 \,d\xi \\ \nonumber
&\leq& 2\hslash\omega_D\sinh
 \frac{\,\sqrt{\,\hslash^2\omega_D^2+2\,\Delta_0^2\,}\,}{k_BT_0}
 \left( \frac{\,\sqrt{\,\hslash^2\omega_D^2+2\,\Delta_0^2\,}\,}{k_BT_0^2}|T-T_0|+\frac{|Y-Y_0|}{\,k_BT_0\sqrt{Y_0}\,}\right), \\ \nonumber
& &\left| I_2(T,\,Y)-I_2(T_0,\,Y_0)\right| \\ \nonumber
&\leq& \int_{2k_BT_c\,\varepsilon}^{\hslash\omega_D}
 \left| g(\eta)-g(\eta_0) \right|\,d\xi \\ \nonumber
&\leq& \hslash\omega_D\,\displaystyle{ \max_{\eta\geq 0}|g'(\eta)| }
 \left( \frac{\,\sqrt{\,\hslash^2\omega_D^2+2\,\Delta_0^2\,}\,}{k_BT_0^2}|T-T_0|+\frac{|Y-Y_0|}{\,k_BT_0\sqrt{Y_0}\,}\right).
\end{eqnarray}
Thus the functions: $(T,\,Y)\mapsto I_1(T,\,Y)$ and $(T,\,Y)\mapsto I_2(T,\,Y)$, and hence $(\partial F/\partial T)$ and $(\partial F/\partial Y)$ are continuous at $(T_0,\,Y_0)\in W_1$.
\end{proof}

\begin{lemma}
\quad The function $F$ is continuous on $W$.
\end{lemma}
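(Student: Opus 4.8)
The plan is to reduce the claim to continuity on the three boundary pieces and then dispose of each by a dominated–convergence argument, the only delicate case being the edge $W_2$ where $T=0$. By Lemma~\ref{lm:FC1onW1} the function $F$ is already continuous (indeed $C^1$) on $W_1$, so it suffices to verify continuity at every point of $W_2\cup W_3\cup W_4$. The structural fact I would record at the outset is that the cutoff $\varepsilon>0$ keeps $\xi$ bounded away from zero: for all admissible $(T,Y)$ and all $\xi\in[\,2k_BT_c\varepsilon,\,\hslash\omega_D\,]$ one has $\sqrt{\xi^2+Y}\geq\xi\geq 2k_BT_c\varepsilon>0$, so that, using $\tanh\leq 1$,
\[
0<h(T,Y,\xi)\leq\frac{1}{\sqrt{\xi^2+Y}}\leq\frac{1}{\,2k_BT_c\varepsilon\,}.
\]
Thus the integrand is dominated, uniformly in $(T,Y)$, by a constant on the compact interval of integration, which is exactly what lets me interchange limits with the integral.

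For the edges $W_3$ $(Y=0)$ and $W_4$ $(T=T_c)$ the argument is routine. On the region $\{0<T\leq T_c,\ Y\geq 0\}$ the radicand $\xi^2+Y$ stays strictly positive, so $h(T,Y,\xi)$ is jointly continuous in $(T,Y,\xi)$. Combined with the uniform bound above, the dominated convergence theorem gives $F(T,Y)\to F(T_0,Y_0)$ whenever $(T,Y)\to(T_0,Y_0)$ with $(T_0,Y_0)\in W_3\cup W_4$; in particular this also settles the corner $(T_c,0)\in W_3$, which is approached from $W_1$, $W_3$ and $W_4$ simultaneously.

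The main obstacle, and the only step needing genuine care, is continuity across $W_2$, where the definition of $h$ switches and $\tanh\bigl(\sqrt{\xi^2+Y}/(2k_BT)\bigr)$ must be shown to tend to its value $1$ at $T=0$. Fix $(0,Y_0)\in W_2$ and let $(T,Y)\to(0,Y_0)$ with $T>0$, so the nearby points lie in $W_1$. Using the elementary inequality $0\leq 1-\tanh s\leq 2e^{-2s}$ together with $\sqrt{\xi^2+Y}\geq 2k_BT_c\varepsilon$, I would bound
\[
\left|\,h(T,Y,\xi)-\frac{1}{\sqrt{\xi^2+Y}}\,\right|
=\frac{1-\tanh\!\bigl(\sqrt{\xi^2+Y}/(2k_BT)\bigr)}{\sqrt{\xi^2+Y}}
\leq\frac{1}{\,k_BT_c\varepsilon\,}\,e^{-2T_c\varepsilon/T},
\]
a bound independent of $\xi$ that tends to $0$ as $T\downarrow 0$. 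Hence $\int h(T,Y,\xi)\,d\xi$ and $\int(\xi^2+Y)^{-1/2}\,d\xi$ share the same limit, and since $Y\mapsto\int_{2k_BT_c\varepsilon}^{\hslash\omega_D}(\xi^2+Y)^{-1/2}\,d\xi$ is continuous at $Y_0$, it follows that $F(T,Y)\to F(0,Y_0)$. Combining this with the continuity on $W_1$ and on the edges $W_3,W_4$ establishes continuity of $F$ on all of $W$. It is precisely the positivity of $\varepsilon$ that makes the convergence $\tanh\to 1$ \emph{uniform} in $\xi$, and thereby removes the only real difficulty.
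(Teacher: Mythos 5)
Your proof is correct and follows essentially the same route as the paper: continuity on $W_1$ via Lemma~\ref{lm:FC1onW1}, then pointwise treatment of the boundary pieces, with the only delicate case being $W_2$, where one splits off the term $1-\tanh\bigl(\sqrt{\xi^2+Y}/(2k_BT)\bigr)$ and shows it tends to $0$ uniformly in $\xi$ as $T\downarrow 0$. The single (cosmetic) difference is the source of that uniformity: you use the cutoff $\sqrt{\xi^2+Y}\geq 2k_BT_c\varepsilon$, while the paper uses $\sqrt{\xi^2+Y}\geq\sqrt{Y_0/2}$ (valid since $Y_0>0$ on $W_2$, and hence independent of $\varepsilon>0$); both estimates are valid and yield the same conclusion.
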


\begin{proof}
Note that $F$ is continuous on $W_1$ by Lemma \ref{lm:FC1onW1}. We then show that $F$ is continuous on $W_2$.

Let $(0,\,Y_0)\in W_2$ and let $(T,\,Y)\in W_1\cup W_2$. Since $(T,\,Y)$ is close to $(0,\,Y_0)$, it follows that $Y>Y_0/2$. Then, by (\ref{eq:functionF}),
\begin{eqnarray}\nonumber
& &\left| F(T,\,Y)-F(0,\,Y_0)\right| \\ \nonumber
&\leq& \int_{2k_BT_c\,\varepsilon}^{\hslash\omega_D} \left\{
 \frac{\,1-\tanh\frac{\,\sqrt{\,\xi^2+Y\,}\,}{2k_BT}\,}{\sqrt{\,\xi^2+Y_0\,}}+\left| \frac{1}{\,\sqrt{\,\xi^2+Y\,}\,}
 -\frac{1}{\,\sqrt{\,\xi^2+Y_0\,}\, } \right| \right\}\,d\xi
 \\ \nonumber
&\leq& \hslash\omega_D \left\{
 \frac{1}{\,\sqrt{Y_0}\,}
 \left( 1-\tanh\frac{\,\sqrt{\,Y_0/2\,}\,}{2k_BT} \right)
 +\frac{\,2\,|Y-Y_0|\,}{\,(\sqrt{2}+1)Y_0^{3/2}\,} \right\}.
\end{eqnarray}
Thus $F$ is continuous on $W_2$. Similarly we can show the continuity of $F$ on $W_3$, and on $W_4$.
\end{proof}

\begin{lemma}\label{lm:FC1onW}
The partial derivatives $\displaystyle{\frac{\,\partial F\,}{\,\partial T\,}}$ and $\displaystyle{\frac{\,\partial F\,}{\,\partial Y\,}}$ are continuous on $W$. Consequently, the function $F$ is of class $C^1$ on $W$.
\end{lemma}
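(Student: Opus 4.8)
The plan is to patch together continuity on the four pieces of $W$. Continuity of both partial derivatives on $W_1$ is already in hand (Lemma \ref{lm:FC1onW1}), so it remains to treat the boundary pieces $W_2$, $W_3$, $W_4$, approaching each from $W_1$ together with the piece itself. The piece $W_4$ costs nothing: at $(T_c,Y_0)$ with $Y_0>0$ the hypotheses $T_0=T_c>0$ and $Y_0>0$ that drive the proof of Lemma \ref{lm:FC1onW1} both hold, and nearby points of $W_1\cup W_4$ obey $T>T_c/2$ with $Y$ near $Y_0$; the Lipschitz-type bounds obtained there then apply unchanged.

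For $W_3$ I would revisit those same estimates at a point $(T_0,0)$ with $0<T_0\leq T_c$. The only difficulty is the factor $1/\sqrt{Y_0}$ in the bounds of Lemma \ref{lm:FC1onW1}, which is meaningless when $Y_0=0$. This is a spurious degeneracy: on the whole integration range $\xi\geq 2k_BT_c\varepsilon>0$, so $\sqrt{\xi^2+Y}\geq 2k_BT_c\varepsilon$ for every $Y\geq 0$, and the $Y$-derivative of $\sqrt{\xi^2+Y}$ is bounded by $1/(4k_BT_c\varepsilon)$ uniformly. Replacing $\sqrt{\xi^2+Y}\geq\sqrt{Y}$ by $\sqrt{\xi^2+Y}\geq 2k_BT_c\varepsilon$ throughout, together with the boundedness of $g'$ from Lemma \ref{lm:gproperty}, gives uniformly continuous integrands and hence continuity of both partials at $(T_0,0)$.

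The crux is $W_2$, where $T_0=0$ and the prefactors $1/T^2$, $1/T^3$ must be balanced against $\eta=\sqrt{\xi^2+Y}/(2k_BT)\to\infty$. Observe first that on the integration range $\eta\geq T_c\varepsilon/T$, so $\eta\to\infty$ uniformly in $\xi$ as $T\downarrow 0$. For $\partial F/\partial T$ this forces $\cosh^{-2}\eta\leq 4\,e^{-2T_c\varepsilon/T}$, whence $T^{-2}\!\int\cosh^{-2}\eta\,d\xi\to 0$ and the limit is $0$, matching the value $\partial F/\partial T(0,Y_0)=0$ of Lemma \ref{lm:FTFYW}. For $\partial F/\partial Y$ the key algebraic identity, from $(2k_BT)^3=(\xi^2+Y)^{3/2}/\eta^3$ and (\ref{eq:fng}), is
\[
\frac{1}{\,2(2k_BT)^3\,}\,g(\eta)=\frac{\,\eta^3 g(\eta)\,}{\,2\,(\xi^2+Y)^{3/2}\,},\qquad
\eta^3 g(\eta)=\frac{\eta}{\,\cosh^2\eta\,}-\tanh\eta .
\]
Since $\eta/\cosh^2\eta\to 0$ and $\tanh\eta\to 1$, we have $\eta^3 g(\eta)\to -1$, and because $\eta\to\infty$ uniformly in $\xi$ while $(\xi^2+Y)^{3/2}\to(\xi^2+Y_0)^{3/2}$ uniformly on the compact $\xi$-interval, the integrand converges uniformly to $-1/\bigl(2(\xi^2+Y_0)^{3/2}\bigr)$. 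Integrating over the bounded range gives
\[
\lim_{(T,Y)\to(0,Y_0)}\frac{\,\partial F\,}{\,\partial Y\,}(T,Y)
=-\frac{1}{\,2\,}\int_{2k_BT_c\,\varepsilon}^{\hslash\omega_D}\frac{d\xi}{\,(\xi^2+Y_0)^{3/2}\,},
\]
which is exactly the value assigned on $W_2$ by Lemma \ref{lm:FTFYW} (and this value depends continuously on $Y_0$, so continuity along $W_2$ itself is automatic). Hence both partial derivatives are continuous at every point of $W_2$.

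Combining the four cases, $\partial F/\partial T$ and $\partial F/\partial Y$ are continuous on all of $W$, and consequently $F$ is of class $C^1$ on $W$, as asserted. I expect the uniform asymptotics $\eta^3 g(\eta)\to -1$ governing $\partial F/\partial Y$ at $W_2$ to be the only genuinely delicate point; the remaining pieces reduce to the estimates of Lemma \ref{lm:FC1onW1} once the artificial $1/\sqrt{Y_0}$ degeneracy is removed using $\xi\geq 2k_BT_c\varepsilon$.
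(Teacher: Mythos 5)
Your proof is correct, and it follows the same skeleton as the paper's proof but is genuinely more complete where it matters. The paper likewise reduces to Lemma \ref{lm:FC1onW1} on $W_1$ and then treats the boundary, but it writes out details only at the single point $(T_c,0)$ (approached from $W_1$, then from $W_3$ and $W_4$), using the same Lipschitz-in-$\eta$ estimates with $\max_{\eta\geq 0}|g'(\eta)|$ that you invoke for $W_3$ and $W_4$; for every other boundary point, including all of $W_2$, it merely asserts the continuity ``can be shown similarly.'' As you correctly identify, $W_2$ is exactly the case where that assertion is least innocuous: the Lipschitz bounds carry prefactors $1/T^2$ and $1/(2k_BT)^3$ that blow up as $T\downarrow 0$, so the argument given at $(T_c,0)$ does not transfer verbatim. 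Your treatment of $W_2$ supplies the missing ingredient: the exponential bound $\cosh^{-2}\eta\leq 4e^{-2T_c\varepsilon/T}$ (valid since $\eta\geq T_c\varepsilon/T$ on the integration range) beats the $1/T^2$ prefactor and gives $\partial F/\partial T\to 0$, while the identity
\[
\frac{g(\eta)}{\,2(2k_BT)^3\,}=\frac{\eta^3 g(\eta)}{\,2(\xi^2+Y)^{3/2}\,},\qquad
\eta^3g(\eta)=\frac{\eta}{\,\cosh^2\eta\,}-\tanh\eta\longrightarrow -1,
\]
with the convergence uniform in $\xi$ because $\eta\to\infty$ uniformly, recovers precisely the value $-\tfrac12\int(\xi^2+Y_0)^{-3/2}\,d\xi$ assigned on $W_2$ by Lemma \ref{lm:FTFYW}. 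Your observation that the $1/\sqrt{Y_0}$ factor in Lemma \ref{lm:FC1onW1} is an artifact, removable on $W_3$ via $\sqrt{\xi^2+Y}\geq 2k_BT_c\varepsilon$, is also sound; the paper achieves the same end at $(T_c,0)$ by bounding the $Y$-increment by $\sqrt{Y}$ instead. In short: same decomposition, but your write-up proves the one case the paper leaves to the reader, and that case requires the asymptotics of $g$ at infinity rather than merely its Lipschitz character, so it is a real addition rather than a routine repetition.
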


\begin{proof}
Note that $(\partial F/\partial T)$ and $(\partial F/\partial Y)$ are continuous on $W_1$ by Lemma \ref{lm:FC1onW1}. We then show that $(\partial F/\partial T)$ and $(\partial F/\partial Y)$ are continuous at $(T_c\,,\,0)\in W_3$. We can show the continuity of those functions at other points in $W$ similarly.

\textit{Step 1}. Let $(T,\,Y)\in W_1$. We show
\[
\frac{\,\partial F\,}{\,\partial T\,}(T,\,Y) \to
 \frac{\,\partial F\,}{\,\partial T\,}(T_c\,,\,0),\;\;
\frac{\,\partial F\,}{\,\partial Y\,}(T,\,Y) \to
\frac{\,\partial F\,}{\,\partial Y\,}(T_c\,,\,0) \quad \mbox{as}\;
(T,\,Y) \to (T_c\,,\,0).
\]
Since $(T,\,Y)$ is close to $(T_c\,,\,0)$, it then follows that $T_c/2<T<T_c$. Set $\eta_0=\frac{\,\sqrt{\hslash^2\omega_D^2+2\,\Delta_0^2}\,}{k_BT_c}$. Then
\begin{eqnarray}\nonumber
& & \left| \frac{1}{\,T^2\cosh^2 \frac{\,\sqrt{\xi^2+Y}\,}{\,2k_BT\,}\,}
-\frac{1}{\,T_c^2\cosh^2 \frac{\,\xi\,}{\,2k_BT_c\,} \,} \right| \\ \nonumber
&\leq& \frac{\,8\cosh \eta_0\,}{T_c^3} \left\{ \left| T-T_c \right|
 \left( \cosh \eta_0+\eta_0 \sinh \eta_0 \right)
 +\frac{\,\sqrt{Y}\,}{4k_B} \sinh \eta_0 \right\},
\end{eqnarray}
and hence \quad $\displaystyle{
(\partial F/\partial T)(T,\,Y)-(\partial F/\partial T)(T_c\,,\,0) \to 0 \quad \mbox{as}\quad (T,\,Y) \to (T_c\,,\,0)}$.

Since
\[
\left| g\left( \frac{\,\sqrt{\,\xi^2+Y\,}\,}{2k_BT} \right)
 -g\left( \frac{\,\xi\,}{\,2k_BT_c\,} \right) \right|
\leq \max_{\eta\geq 0}|g'(\eta)| \left(
 \frac{\,\hslash\omega_D\left| T-T_c \right|\,}{k_BT_c^2}+
 \frac{\,\sqrt{Y}\,}{\,k_BT_c\,} \right),
\]
it follows that
\[
\int_{2k_BT_c\,\varepsilon}^{\hslash\omega_D} \left\{
 g\left( \frac{\,\sqrt{\,\xi^2+Y\,}\,}{2k_BT} \right)
 -g\left( \frac{\,\xi\,}{\,2k_BT_c\,} \right) \right\}\,d\xi \to 0 \quad
 \mbox{as}\quad (T,\,Y) \to (T_c\,,\,0),
\]
and hence \quad $\displaystyle{
(\partial F/\partial Y)(T,\,Y)-(\partial F/\partial Y)(T_c\,,\,0) \to 0 \quad \mbox{as}\quad (T,\,Y) \to (T_c\,,\,0)}$.

\textit{Step 2}. When $(T,\,Y)=(T,\,0)\in W_3$ and $(T,\,Y)=(T_c\,,\,Y)\in W_4$, an argument similar to that in Step 1 gives
\[
\frac{\,\partial F\,}{\,\partial T\,}(T,\,0) \to
 \frac{\,\partial F\,}{\,\partial T\,}(T_c\,,\,0),\quad
\frac{\,\partial F\,}{\,\partial Y\,}(T,\,0) \to
\frac{\,\partial F\,}{\,\partial Y\,}(T_c\,,\,0) \quad \mbox{as}\quad
(T,\,0) \to (T_c\,,\,0)
\]
and
\[
\frac{\,\partial F\,}{\,\partial T\,}(T_c\,,\,Y) \to
 \frac{\,\partial F\,}{\,\partial T\,}(T_c\,,\,0),\quad
\frac{\,\partial F\,}{\,\partial Y\,}(T_c\,,\,Y) \to
\frac{\,\partial F\,}{\,\partial Y\,}(T_c\,,\,0)
\]
as \quad $(T_c\,,\,Y) \to (T_c\,,\,0)$. The result follows.
\end{proof}

\section{The second-order partial derivatives of the function $F$}

In this section we deal with the second-order partial derivatives of the function $F$ and show that $F$ is of class $C^2$ on $W_1$.

Let $G$ be given by
\begin{equation}\label{eq:fncg}
G(\eta)= \left\{ \begin{array}{ll}\displaystyle{
\frac{1}{\,\eta^2\,} \left\{ 3\,g(\eta)
+2\,\frac{\tanh\eta}{\,\eta\,\cosh^2\eta\,}\right\} }\qquad &(\eta>0),\\
\noalign{\vskip0.3cm} \displaystyle{
-\frac{\,16\,}{\,15\,} } &(\eta=0).
\end{array}\right.
\end{equation}

A straightforward calculation gives the following.
\begin{lemma}\label{lm:cgproperty}
Let $G$ be as in (\ref{eq:fncg}) and $g$ as in (\ref{eq:fng}). Then the
function $G$ is of class $C^1$ on $[0,\,\infty)$ and satisfies
\[
g'(\eta)=-\eta\, G(\eta),\qquad G'(0)=0,\qquad
\lim_{\eta\to\infty}G(\eta)=\lim_{\eta\to\infty}G'(\eta)=0.
\]
\end{lemma}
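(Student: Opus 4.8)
The plan is to reduce every assertion about $G$ to the single identity $g'(\eta)=-\eta\,G(\eta)$ together with the already-established regularity of $g$, so that the only genuinely delicate point becomes the apparent singularity at $\eta=0$.

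First I would establish the identity for $\eta>0$ by direct differentiation of (\ref{eq:fng}). Writing $g(\eta)=\eta^{-2}\cosh^{-2}\eta-\eta^{-3}\tanh\eta$ and differentiating term by term gives
\[
g'(\eta)=-\frac{3}{\,\eta^3\cosh^2\eta\,}-\frac{2\tanh\eta}{\,\eta^2\cosh^2\eta\,}+\frac{3\tanh\eta}{\,\eta^4\,},
\]
and multiplying (\ref{eq:fncg}) by $-\eta$ produces exactly these three terms; thus $g'(\eta)=-\eta\,G(\eta)$ on $(0,\infty)$. At $\eta=0$ both sides vanish ($g'(0)=0$ by Lemma \ref{lm:gproperty}), so the identity holds on all of $[0,\infty)$.

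For $\eta>0$ the function $G$ is a quotient of smooth functions with nonvanishing denominator, hence $C^\infty$; the crux of the lemma is the behavior across $\eta=0$. Here I would use that $g$ is even and real-analytic near $0$: the numerator $\cosh^{-2}\eta-\eta^{-1}\tanh\eta$ is even, analytic, and vanishes to second order, with expansion $-\frac{2}{3}\eta^2+\frac{8}{15}\eta^4+\cdots$, so after dividing by $\eta^2$ one obtains $g(\eta)=-\frac{2}{3}+\frac{8}{15}\eta^2+\cdots$. Then $g'(\eta)=\frac{16}{15}\eta+O(\eta^3)$ is odd, and since $G(\eta)=-g'(\eta)/\eta$ for $\eta>0$, the function $G$ extends to an even, real-analytic function near $0$ with $G(0)=-\frac{16}{15}$, in agreement with (\ref{eq:fncg}). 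In particular $G$ is of class $C^1$ at $0$, and evenness forces $G'(0)=0$. Combined with smoothness on $(0,\infty)$, this shows $G$ is of class $C^1$ on $[0,\infty)$.

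Finally, for the limits at infinity I would again exploit $G(\eta)=-g'(\eta)/\eta$: since $g'(\eta)\to0$ as $\eta\to\infty$ by Lemma \ref{lm:gproperty}, we get $G(\eta)\to0$ at once. Differentiating the identity yields $g''(\eta)=-G(\eta)-\eta\,G'(\eta)$, hence $G'(\eta)=-\bigl(G(\eta)+g''(\eta)\bigr)/\eta$; each term in the explicit formula for $g''$ (obtained by differentiating the display above) carries either a negative power of $\eta$ or a factor $\cosh^{-2}\eta$ and so tends to $0$, while $G(\eta)\to0$, giving $G'(\eta)\to0$. The main obstacle is confined entirely to the endpoint $\eta=0$: once the removable-singularity structure of $g$ is identified, the identity $G=-g'/\eta$ transfers every property of $G$ to the already-understood function $g$, and the remaining estimates are routine.
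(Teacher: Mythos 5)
Your proof is correct, and every computational detail checks out: the displayed formula for $g'$ is right, it agrees term by term with $-\eta\,G(\eta)$, and the expansion $\cosh^{-2}\eta-\eta^{-1}\tanh\eta=-\tfrac{2}{3}\eta^{2}+\tfrac{8}{15}\eta^{4}-\cdots$ gives $g'(\eta)=\tfrac{16}{15}\eta+O(\eta^{3})$, hence $G(0)=-\tfrac{16}{15}$ consistently with (\ref{eq:fncg}). There is nothing in the paper to compare against in detail: the author disposes of this lemma with the single sentence ``a straightforward calculation gives the following,'' so your write-up supplies exactly the calculation that was left implicit. Your organization is in fact a bit slicker than a brute-force verification would be: by proving the identity $g'(\eta)=-\eta\,G(\eta)$ first and then observing that $g$ extends to an even real-analytic function near $\eta=0$, you obtain the $C^{1}$ regularity of $G$ at the endpoint, the value $G(0)$, and $G'(0)=0$ simultaneously from parity, rather than computing one-sided difference quotients for $G$ and $G'$ separately; and the limits at infinity reduce to dividing quantities that already tend to zero (by Lemma \ref{lm:gproperty} and the explicit form of $g''$) by $\eta\to\infty$. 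This is a legitimate and arguably cleaner way to carry out the ``straightforward calculation'' the paper invokes.
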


\begin{lemma}\label{lm:FTTexistence}
The values of the partial derivatives \  
$\displaystyle{ \frac{\,\partial^2 F\,}{\,\partial T^2\,} }$,\quad
$\displaystyle{ \frac{\partial}{\,\partial Y\,}\left( 
\frac{\,\partial F\,}{\,\partial T\,} \right) }$, \\
$\displaystyle{ \frac{\partial}{\,\partial T\,}\left( 
\frac{\,\partial F\,}{\,\partial Y\,} \right) }$ \  and \   
$\displaystyle{ \frac{\,\partial^2 F\,}{\,\partial Y^2\,} }$
exist at each point in $W_1$. Furthermore,
\[
\frac{\partial}{\,\partial Y\,}\left( \frac{\,\partial F\,}{\,\partial T\,}\right)=\frac{\partial}{\,\partial T\,}\left( 
\frac{\,\partial F\,}{\,\partial Y\,} \right) \qquad \mbox{on} \quad W_1\,.
\]
\end{lemma}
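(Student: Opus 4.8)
The plan is to obtain all four second-order partial derivatives on $W_1$ by differentiating, once more, the explicit first-order formulas of Lemma \ref{lm:FTFYW} under the integral sign, and then to read off the equality of the two mixed partials from the smoothness of the integrand. Throughout I write $\eta=\sqrt{\xi^2+Y}/(2k_BT)$, so that $\sqrt{\xi^2+Y}=2k_BT\eta$, $\partial\eta/\partial T=-\eta/T$ and $\partial\eta/\partial Y=1/(8k_B^2T^2\eta)$. The crucial structural observation is that on $W_1$ there is no singularity to contend with: since $\xi\ge 2k_BT_c\varepsilon>0$ one has $\eta\ge T_c\varepsilon/T>0$ uniformly in $\xi$, and the integrand $h(T,Y,\xi)=\tanh\eta/(2k_BT\eta)$ is therefore jointly of class $C^\infty$ in $(T,Y)$ and continuous in $(T,Y,\xi)$ on all of $W_1\times[\,2k_BT_c\varepsilon,\,\hslash\omega_D\,]$.

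First I would fix $(T_0,Y_0)\in W_1$ and choose a closed box neighborhood $K\subset W_1$ on which $T\ge T_0/2$ and $Y\ge Y_0/2$. On the compact set $K\times[\,2k_BT_c\varepsilon,\,\hslash\omega_D\,]$ the $\xi$-integrands appearing in $\partial F/\partial T$ and $\partial F/\partial Y$, namely $\cosh^{-2}\eta$ and $g(\eta)$, together with their first $T$- and $Y$-derivatives, are continuous and hence bounded. This is exactly the hypothesis needed for the classical theorem on differentiation under the integral sign over a compact interval: the derivative of the integral equals the integral of the derivative, and the resulting function is continuous. Applying this to each of the two first-order formulas, combined with the product rule on the explicit prefactors $-1/(2k_BT^2)$ and $1/(16k_B^3T^3)$, produces the four second-order partials as integrals over $[\,2k_BT_c\varepsilon,\,\hslash\omega_D\,]$ of continuous integrands, which establishes their existence at $(T_0,Y_0)$ and hence at every point of $W_1$. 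In differentiating $g(\eta)$ I would use the identity $g'(\eta)=-\eta\,G(\eta)$ of Lemma \ref{lm:cgproperty}; since $G$ is of class $C^1$ on $[0,\infty)$ and $\eta$ is bounded away from $0$, each differentiated integrand stays manifestly continuous on $K$.

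For the equality of the mixed partials I would argue that, because $h$ is $C^\infty$ in $(T,Y)$ on $W_1$, Clairaut's theorem applies pointwise to $h$, giving $\partial^2 h/(\partial T\,\partial Y)=\partial^2 h/(\partial Y\,\partial T)$ for every fixed $\xi$. Since differentiation may be passed under the integral sign in either order by the previous step, both $(\partial/\partial Y)(\partial F/\partial T)$ and $(\partial/\partial T)(\partial F/\partial Y)$ equal the integral over $[\,2k_BT_c\varepsilon,\,\hslash\omega_D\,]$ of this common mixed partial of $h$, so the two agree. Equivalently, having shown by continuity estimates of the type used in Lemma \ref{lm:FC1onW1} that both mixed partials are continuous on $W_1$, one may simply invoke Clairaut's theorem for $F$ itself.

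The part demanding the most care is not any single estimate but the verification that the once-differentiated integrands are dominated uniformly on the neighborhood $K$, so that the interchange of $d/dT$ (respectively $d/dY$) with $\int d\xi$ is legitimate; this is precisely where the boundedness of $\eta$ and of $T$ away from $0$ on $K$ does the work. The remaining effort is the chain-rule computation of the second-order integrands, streamlined by $g'(\eta)=-\eta\,G(\eta)$ and the elementary derivatives of $\eta$ recorded above, but this is the ``straightforward calculation'' the statement alludes to and presents no genuine obstacle.
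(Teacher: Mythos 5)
Your proposal is correct and follows essentially the same route as the paper: justify differentiation under the integral sign over the compact interval $[\,2k_BT_c\,\varepsilon,\,\hslash\omega_D\,]$ by bounding the differentiated integrands on a neighborhood where $T$ is bounded away from $0$, then obtain all four second-order partials as integrals of continuous integrands. The only cosmetic difference is that the paper establishes the equality of the mixed partials by computing both explicitly and observing they give the same formula $\frac{1}{\,(2k_BT)^3T\,}\int_{2k_BT_c\,\varepsilon}^{\hslash\omega_D}\frac{\tanh\eta}{\,\eta\,\cosh^2\eta\,}\,d\xi$, whereas you invoke Clairaut's theorem for the integrand $h$ and pass it through the integral, which amounts to the same computation.
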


\begin{proof}
Let $(T,\,Y)\in W_1$. Then there is a $\theta$ \  \  $(0<\theta<1)$ satisfying $\theta T_c<T<T_c$. Set $\displaystyle{ \eta=\frac{\,\sqrt{\,\xi^2+Y\,}\,}{2k_BT} }$. Then
\[
\left|\frac{\partial}{\,\partial T\,} \frac{1}{\,\cosh^2\eta\,}\right|\leq \frac{\,\sqrt{\hslash^2\omega_D^2+2\,\Delta_0^2}\,}{k_B\theta^2T_c^2},
\]
where the right side is integrable on $[2k_BT_c\,\varepsilon,\,\hslash\omega_D]$. So the function: $(T,\,Y)\mapsto I_1(T,\,Y)$ (see (\ref{eq:I1I2})), and hence $(\partial F/\partial T)$ (see Lemma \ref{lm:FTFYW}) is differentiable with respect to $T$ on $W_1$, and the second-order partial derivative is given by
\[
\frac{\,\partial^2 F\,}{\,\partial T^2\,}(T,\,Y)
=\frac{1}{\,k_BT^3\,} \left\{ I_1(T,\,Y)
 -\int_{2k_BT_c\,\varepsilon}^{\hslash\omega_D}
 \frac{\eta\,\tanh\eta}{\,\cosh^2\eta \,}\,d\xi \right\},\qquad
 \eta=\frac{\,\sqrt{\,\xi^2+Y\,}\,}{2k_BT}\,.
\]

Similarly we can show that $(\partial F/\partial T)$ is differentiable with respect to $Y$ on $W_1$, that $(\partial F/\partial Y)$ is differentiable with respect to $T$ on $W_1$, and that $(\partial F/\partial Y)$ is differentiable with respect to $Y$ on $W_1$. The corresponding second-order partial derivatives are given as follows:
\[
\frac{\partial}{\,\partial Y\,}\left( 
\frac{\,\partial F\,}{\,\partial T\,} \right)(T,\,Y)
=\frac{\partial}{\,\partial T\,}\left( 
\frac{\,\partial F\,}{\,\partial Y\,} \right)(T,\,Y)
=\frac{1}{\,(2k_BT)^3T\,} \int_{2k_BT_c\,\varepsilon}^{\hslash\omega_D}
 \frac{\tanh\eta}{\,\eta\,\cosh^2\eta \,}\,d\xi,
\]
\begin{equation}\label{eq:fyy}
\frac{\,\partial^2 F\,}{\,\partial Y^2\,}(T,\,Y)
=-\,\frac{1}{\,4\,(2k_BT)^5\,}
\int_{2k_BT_c\,\varepsilon}^{\hslash\omega_D} G(\eta)\,d\xi\,,
\qquad \eta=\frac{\,\sqrt{\,\xi^2+Y\,}\,}{2k_BT}\,.
\end{equation}
Here, $G$ is that in Lemma \ref{lm:cgproperty} (see also (\ref{eq:fncg})).
\end{proof}

\begin{lemma}\label{lm:FC2W1}
The partial derivatives
$\displaystyle{ \frac{\,\partial^2 F\,}{\,\partial T^2\,} }$,
$\displaystyle{ \frac{\partial}{\,\partial Y\,}\left(
 \frac{\,\partial F\,}{\,\partial T\,} \right) }$ and
$\displaystyle{\frac{\,\partial^2 F\,}{\,\partial Y^2\,}}$
are continuous on $W_1$. Consequently, $F$ is of class $C^2$ on $W_1$.
\end{lemma}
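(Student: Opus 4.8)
The plan is to mirror the argument of Lemma \ref{lm:FC1onW1}, now applied to the explicit integral formulas for the second-order partial derivatives furnished by Lemma \ref{lm:FTTexistence}. Each of the three derivatives in question is the product of a prefactor that is a continuous, nonvanishing function of $T$ alone on $W_1$ (namely $1/(k_BT^3)$, $1/\{(2k_BT)^3T\}$ and $-1/\{4(2k_BT)^5\}$, all finite since $T>0$ on $W_1$) with an integral of the form $\int_{2k_BT_c\varepsilon}^{\hslash\omega_D}\phi(\eta)\,d\xi$, where $\eta=\sqrt{\xi^2+Y}/(2k_BT)$. Since the prefactors are manifestly continuous, it suffices to prove that each such integral is continuous at an arbitrary $(T_0,Y_0)\in W_1$; for $\partial^2F/\partial T^2$ one also invokes the continuity of $I_1$ already established in Lemma \ref{lm:FC1onW1}.

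First I would fix $(T_0,Y_0)\in W_1$ and, exactly as in Lemma \ref{lm:FC1onW1}, restrict to $(T,Y)$ so close to $(T_0,Y_0)$ that $T>T_0/2$. On this neighbourhood the map $(T,Y)\mapsto\eta=\sqrt{\xi^2+Y}/(2k_BT)$ satisfies a Lipschitz bound $|\eta-\eta_0|\le C\bigl(|T-T_0|+|Y-Y_0|\bigr)$ uniformly for $\xi\in[2k_BT_c\varepsilon,\hslash\omega_D]$, where $\eta_0=\sqrt{\xi^2+Y_0}/(2k_BT_0)$ and the constant $C$ depends only on $T_0,Y_0,\hslash\omega_D,\Delta_0$. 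This is precisely the estimate already produced in the displays of Lemma \ref{lm:FC1onW1}; it uses $Y_0>0$ to control the $Y$-derivative of $\sqrt{\xi^2+Y}$ and $T>T_0/2$ to control the $T$-dependence.

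Next, for each relevant integrand $\phi$ — namely $\phi(\eta)=\eta\tanh\eta/\cosh^2\eta$ for $\partial^2F/\partial T^2$, $\phi(\eta)=\tanh\eta/(\eta\cosh^2\eta)$ for the mixed derivative, and $\phi=G$ for $\partial^2F/\partial Y^2$ — I would note that $\phi$ is of class $C^1$ with bounded derivative on $[0,\infty)$; for $G$ this is exactly Lemma \ref{lm:cgproperty}, and for the other two it is a direct check, each function extending smoothly to $\eta=0$ and tending to $0$ as $\eta\to\infty$. The mean value theorem then gives $|\phi(\eta)-\phi(\eta_0)|\le\bigl(\max_{\eta\ge0}|\phi'(\eta)|\bigr)\,|\eta-\eta_0|$, and integrating over the bounded interval $[2k_BT_c\varepsilon,\hslash\omega_D]$ together with the Lipschitz bound yields
\[
\left|\int_{2k_BT_c\varepsilon}^{\hslash\omega_D}\phi(\eta)\,d\xi-\int_{2k_BT_c\varepsilon}^{\hslash\omega_D}\phi(\eta_0)\,d\xi\right|\le C'\bigl(|T-T_0|+|Y-Y_0|\bigr)\longrightarrow 0,
\]
so each integral, and hence each of the three second-order partial derivatives, is continuous at $(T_0,Y_0)$.

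Finally, since $(T_0,Y_0)\in W_1$ was arbitrary, all three derivatives are continuous on $W_1$; the remaining mixed derivative $\partial(\partial F/\partial Y)/\partial T$ coincides with $\partial(\partial F/\partial T)/\partial Y$ on $W_1$ by Lemma \ref{lm:FTTexistence} and is therefore continuous as well. As all second-order partials exist and are continuous on the open set $W_1$, the function $F$ is of class $C^2$ on $W_1$. I expect the only slightly delicate point to be verifying the uniform boundedness of $\phi'$ for the two new integrands near $\eta=0$, where the factors $\tanh\eta/\eta$ and their derivatives must be handled through their power-series behaviour; but this is routine and parallels the role played by Lemma \ref{lm:gproperty} in the first-order argument.
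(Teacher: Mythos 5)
Your proposal is correct and takes essentially the same approach as the paper: the paper proves continuity of $\partial^2 F/\partial Y^2$ by combining the Lipschitz estimate on $\eta=\sqrt{\xi^2+Y}/(2k_BT)$ (using $T>T_0/2$ and $Y_0>0$) with the mean value theorem and the bound $\max_{\eta\geq 0}|G'(\eta)|$ from Lemma \ref{lm:cgproperty}, and then remarks that the remaining second-order derivatives are handled similarly. Your write-up simply makes the ``similarly'' explicit by checking that the integrands $\eta\tanh\eta/\cosh^2\eta$ and $\tanh\eta/(\eta\cosh^2\eta)$ are $C^1$ with bounded derivative on $[0,\infty)$, and by invoking the continuity of $I_1$ for the $\partial^2 F/\partial T^2$ term, which is exactly what the paper's argument requires.
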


\begin{proof}
We show that $(\partial^2 F/\partial Y^2)$ is continuous on $W_1$. Similarly we can show the continuity of other second-order partial derivatives.

By (\ref{eq:fyy}), it suffices to show that the function: $(T,\,Y)\mapsto I_3(T,\,Y)$ is continuous at $(T_0\,,\,Y_0)\in W_1$. Here,
\[
I_3(T,\,Y)=\int_{2k_BT_c\,\varepsilon}^{\hslash\omega_D}
 G(\eta)\,d\xi\,,\qquad \eta=\frac{\,\sqrt{\,\xi^2+Y\,}\,}{2k_BT}\,.
\]
Since $(T,\,Y)$ is close to $(T_0\,,\,Y_0)$, it then follows that
$T_0/2<T$. A straightforward calculation then gives
\begin{eqnarray}\nonumber
& &\left| \,I_3(T,\,Y)-I_3(T_0\,,\,Y_0)\,\right|\\ \nonumber
&\leq& \hslash\omega_D\,\displaystyle{\max_{\eta\geq 0}|G'(\eta)|}
\left( \frac{\,\sqrt{\,\hslash^2\omega_D^2+2\,\Delta_0^2\,}\,}{k_BT_0^2}|T-T_0|+\frac{|Y-Y_0|}{\,k_BT_0\sqrt{Y_0}\,}\right).
\end{eqnarray}
Hence the function: $(T,\,Y)\mapsto I_3(T,\,Y)$ is continuous at $(T_0\,,\,Y_0)\in W_1$.
\end{proof}

\section{Proof of Proposition \ref{prp:gapfunction}}

In this section we prove Proposition \ref{prp:gapfunction} in a sequence of lemmas.

\begin{remark} One may prove Proposition \ref{prp:gapfunction} on the basis of the implicit function theorem. In this case, \textit{an interior point} $(T_0,\,Y_0)$ of the domain $W$ satisfying $\displaystyle{ F(T_0,\,Y_0)=0 }$ need to exist. But there are the two points $(0,\,\Delta^2)$ and $(T_c\,,\,0)$ in \textit{the boundary} of $W$ satisfying
\begin{equation}\label{eq:F=0}
F(0,\,\Delta^2)=F(T_c\,,\,0)=0.
\end{equation}
So one can not apply the implicit function theorem in its present form.
\end{remark}

\begin{lemma}\label{lm:existence}
There is a unique solution: $T \mapsto Y=f(T)$ to the gap equation $\displaystyle{F(T,\,Y)=0}$ such that the function $f$ is continuous on the closed interval $[0,\,T_c]$ and satisfies $f(0)=\Delta^2$ and $f(T_c)=0$.
\end{lemma}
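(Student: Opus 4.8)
The plan is to solve $F(T,\,Y)=0$ for $Y$ at each fixed $T$ by the intermediate value theorem and then assemble the pointwise solutions into a continuous function. The structural fact that drives everything is that for each fixed $T\in[0,\,T_c]$ the map $Y\mapsto F(T,\,Y)$ is continuous (since $F$ is continuous on $W$) and \emph{strictly decreasing}: on $W_1$, $W_3$, $W_4$ this is $\partial F/\partial Y<0$ from Lemma~\ref{lm:FTFYminus}, while on $W_2$ the explicit formula in Lemma~\ref{lm:FTFYW} gives $(\partial F/\partial Y)(0,\,Y)=-\tfrac12\int_{2k_BT_c\varepsilon}^{\hslash\omega_D}(\sqrt{\xi^2+Y}\,)^{-3}\,d\xi<0$. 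Hence $F(T,\,\cdot)$ has at most one zero for each $T$, which immediately yields \emph{uniqueness} of any solution $f$, continuous or not; so the whole content of the lemma is the existence of a continuous selection together with its boundary values.

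For existence at a fixed $T\in(0,\,T_c)$ I would bracket the zero between $Y=0$ and $Y=\Delta^2$. Since $\partial F/\partial T<0$ on $W_3$ and $F(T_c,\,0)=0$ by \eqref{eq:F=0}, the map $T\mapsto F(T,\,0)$ is strictly decreasing, so $F(T,\,0)>F(T_c,\,0)=0$ for $T<T_c$. For the upper end, the strict inequality $\tanh<1$ gives the pointwise comparison $h(T,\,Y,\,\xi)<h(0,\,Y,\,\xi)$ whenever $T>0$, hence $F(T,\,Y)<F(0,\,Y)$; taking $Y=\Delta^2$ and using $F(0,\,\Delta^2)=0$ from \eqref{eq:F=0} gives $F(T,\,\Delta^2)<0$. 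As $\Delta^2<2\,\Delta_0^2$, the point $(T,\,\Delta^2)$ lies in $W_1$, and the intermediate value theorem produces a (unique) zero $f(T)\in(0,\,\Delta^2)$. Declaring $f(0)=\Delta^2$ and $f(T_c)=0$ then extends $f$ to all of $[0,\,T_c]$ with the required boundary values.

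Next I would record monotonicity of $f$, which is the device replacing the implicit function theorem. Given $T_1<T_2$ in $(0,\,T_c)$, evaluating at the common height $Y=f(T_1)$ gives $F(T_2,\,f(T_1))<F(T_1,\,f(T_1))=0$ because $\partial F/\partial T<0$; since $F(T_2,\,\cdot)$ is strictly decreasing and vanishes at $f(T_2)$, this forces $f(T_2)<f(T_1)$. Thus $f$ is strictly decreasing on $[0,\,T_c]$, consistent with $f(0)=\Delta^2$ being the largest value and $f(T_c)=0$ the smallest.

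Finally, continuity. Being monotone and bounded, $f$ has one-sided limits at every $T_0\in[0,\,T_c]$; any such limit $L$ satisfies $L\in[0,\,\Delta^2]$, so $(T_0,\,L)\in W$. Passing to the limit $T\to T_0$ along the appropriate side in the identity $F(T,\,f(T))=0$ and using the continuity of $F$ on $W$ gives $F(T_0,\,L)=0$; but $F(T_0,\,\cdot)$ has a unique zero, namely $f(T_0)$, so $L=f(T_0)$, whence both one-sided limits coincide with $f(T_0)$ and $f$ is continuous at $T_0$. I expect the main obstacle to be exactly these two endpoints $T=0$ and $T=T_c$: as the preceding remark notes, the implicit function theorem is unavailable there since $(0,\,\Delta^2)$ and $(T_c,\,0)$ sit on the boundary of $W$, and it is the combination of strict monotonicity of $f$ with joint continuity of $F$ up to the boundary that supplies continuity at those points.
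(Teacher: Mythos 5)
Your proof is correct, and at the skeleton level it follows the paper's strategy: circumvent the implicit function theorem by combining strict monotonicity of $F$ in each variable, the intermediate value theorem, and the boundary identities \eqref{eq:F=0}. Two of your devices, however, are genuinely different from the paper's and worth comparing. For the bracket, the paper picks an arbitrary $Y_1\in(0,\,2\Delta_0^2)$ with $F(T_c,\,Y_1)<0$ and uses continuity (Lemma \ref{lm:FC1onW}) to produce a $T_1<T_c$ with $F(T,\,Y_1)<0$ on $[T_1,\,T_c]$; the intermediate value theorem is therefore applied directly only on $[\max(T_1,\,T_2),\,T_c]$, and the extension of the solution down to $T=0$ is dispatched in two terse sentences. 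Your pointwise comparison $\tanh<1$, which gives $F(T,\,\Delta^2)<F(0,\,\Delta^2)=0$ for \emph{every} $T\in(0,\,T_c)$, yields the uniform bracket $0<f(T)<\Delta^2$ on the whole interval in one stroke, so no patching near $T=0$ is needed. For continuity of $f$, the paper essentially asserts it from the continuity of $F$ on $W$ and obtains monotonicity of $f$ only later (in the lemma following Lemma \ref{lm:derivative}, via $f'=-F_T/F_Y$); you reverse the order, deriving strict monotonicity of $f$ directly from $\partial F/\partial T<0$ and then extracting continuity from monotone one-sided limits plus uniqueness of the zero of $F(T_0,\,\cdot)$. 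This makes the behavior at the two boundary points $(0,\,\Delta^2)$ and $(T_c,\,0)$ --- exactly where the implicit function theorem fails and where the paper's write-up is thinnest --- fully explicit. One small point to make airtight: at $T_0=0$ you need $(0,\,L)\in W$, i.e. $L>0$, which does not follow from $L\in[0,\,\Delta^2]$ alone (the point $(0,\,0)$ lies outside $W$) but does follow from your monotonicity, since $L=\lim_{T\downarrow 0}f(T)\geq f(T')>0$ for any fixed $T'\in(0,\,T_c)$.
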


\begin{proof}
By Lemmas \ref{lm:FTFYminus}, \ref{lm:FC1onW} and (\ref{eq:F=0}), the function: $Y \mapsto F(T_c\,,\,Y)$ is monotonically decreasing and there is a $Y_1$ \   $(0<Y_1<2\Delta_0^2)$ satisfying $F(T_c\,,\,Y_1)<0$. Note that $Y_1$ is arbitrary as long as $0<Y_1<2\Delta_0^2$. Hence, by Lemma \ref{lm:FC1onW}, there is a $T_1$ \   $(0<T_1<T_c)$ satisfying $F(T_1\,,\,Y_1)<0$. Hence, $F(T,\,Y_1)<0$ for $T_1\leq T\leq T_c$. On the other hand, by Lemmas \ref{lm:FTFYminus}, \ref{lm:FC1onW} and (\ref{eq:F=0}), the function: $T \mapsto F(T,\,0)$ is monotonically decreasing and there is a $T_2$ \   $(0<T_2<T_c)$ satisfying $F(T_2\,,\,0)>0$. Note that $T_2$ is arbitrary as long as $0<T_2<T_c$. Hence, $F(T,\,0)>0$ for $T_2\leq T<T_c$.

Let $\max(T_1\,,\,T_2)\leq T<T_c$ and fix $T$. It then follows from Lemmas \ref{lm:FTFYminus} and \ref{lm:FC1onW} that the function: $Y \mapsto F(T,\,Y)$ with $T$ fixed is monotonically decreasing on $[0,\,Y_1]$. Since $F(T,\,0)>0$ and $F(T,\,Y_1)<0$, there is a unique $Y$ \   $(0<Y<Y_1)$ satisfying $F(T,\,Y)=0$. When $T=T_c$, there is a unique value $Y=0$ satisfying $F(T_c\,,\,Y)=0$ \  (see (\ref{eq:F=0})).

Since $F$ is continuous on $W$ by Lemma \ref{lm:FC1onW}, there is a unique solution: $T \mapsto Y=f(T)$ to the gap equation $\displaystyle{F(T,\,Y)=0}$ such that the function $f$ is continuous on $[\max(T_1\,,\,T_2),\,T_c]$ and $f(T_c)=0$.

Since $(\partial F/\partial Y)(0,\,Y)<0$ \   $(0<Y<2\Delta_0^2)$ by Lemma
\ref{lm:FTFYW}, there is a unique value $Y=\Delta^2$ satisfying
$F(0,\,Y)=0$. Combining Lemma \ref{lm:FC1onW} with Lemma \ref{lm:FTFYminus} therefore implies that the function $f$ is continuous on $[0,\,T_c]$ and that $f(0)=\Delta^2$ and $f(T_c)=0$.
\end{proof}

\begin{lemma}\label{lm:derivative}
The function $f$ given by Lemma \ref{lm:existence} is of class $C^1$ on $[0,\,T_c]$, and the derivative $f'$ satisfies
\[
f'(0)=0,\qquad f'(T_c)=8\, k_B^2T_c\,\frac{\,\displaystyle{
 \int_{\displaystyle{\varepsilon}}
 ^{\displaystyle{\hslash\omega_D/(2k_BT_c)}}
 \frac{d\eta}{\,\cosh^2\eta\,} }\,}
{\,\displaystyle{ 
 \int_{\displaystyle{\varepsilon}}
 ^{\displaystyle{\hslash\omega_D/(2k_BT_c)}}
 g(\eta)\,d\eta }\,}\,.
\]
\end{lemma}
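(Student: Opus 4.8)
The plan is to treat the open interval and the two endpoints separately, since the implicit function theorem applies only in the interior while the solution curve meets the boundary of $W$ at $(0,\Delta^2)$ and $(T_c,0)$, exactly the obstruction flagged in the remark preceding Lemma~\ref{lm:existence}. On $(0,T_c)$ I would use the implicit function theorem to obtain a $C^1$ formula for $f'$, and at the endpoints I would compute the one-sided limits of that formula and promote them to genuine one-sided derivatives by a mean value theorem argument.

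First I would work on $(0,T_c)$, where the curve $(T,f(T))$ lies in the open set $W_1$. There $F$ is of class $C^1$ (Lemma~\ref{lm:FC1onW}) and $(\partial F/\partial Y)(T,f(T))<0\neq 0$ (Lemma~\ref{lm:FTFYminus}), so the implicit function theorem applies at each $T_0\in(0,T_c)$; by the uniqueness in Lemma~\ref{lm:existence} the local solution it produces is $f$ itself. This gives $f\in C^1(0,T_c)$ together with
\[
f'(T)=-\frac{(\partial F/\partial T)(T,f(T))}{(\partial F/\partial Y)(T,f(T))},\qquad 0<T<T_c.
\]

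Next I would pass to the endpoints. Since $f$ is continuous on $[0,T_c]$ with $f(0)=\Delta^2$ and $f(T_c)=0$, the curve approaches $(0,\Delta^2)\in W_2$ as $T\downarrow 0$ and $(T_c,0)\in W_3$ as $T\uparrow T_c$. Because $\partial F/\partial T$ and $\partial F/\partial Y$ are continuous on all of $W$ by Lemma~\ref{lm:FC1onW}, the numerator and denominator converge to their boundary values, and since the denominator limit is nonzero the quotient converges. At $(0,\Delta^2)$ Lemma~\ref{lm:FTFYW} gives $(\partial F/\partial T)(0,\Delta^2)=0$ and $(\partial F/\partial Y)(0,\Delta^2)<0$, whence $\lim_{T\downarrow 0}f'(T)=0$. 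At $(T_c,0)$ I would substitute $\eta=\xi/(2k_BT_c)$ in the explicit formulas of Lemma~\ref{lm:FTFYW} to get
\[
\frac{\partial F}{\partial T}(T_c,0)=-\frac{1}{T_c}\int_{\varepsilon}^{\hslash\omega_D/(2k_BT_c)}\frac{d\eta}{\cosh^2\eta},\qquad \frac{\partial F}{\partial Y}(T_c,0)=\frac{1}{8k_B^2T_c^2}\int_{\varepsilon}^{\hslash\omega_D/(2k_BT_c)}g(\eta)\,d\eta,
\]
the second being strictly negative because $g<0$; forming the quotient reproduces the claimed value of $\lim_{T\uparrow T_c}f'(T)$.

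Finally I would convert these limits into one-sided derivatives using the standard criterion: a function continuous at an endpoint, differentiable on the adjacent open interval, and whose derivative has a finite limit there, is one-sidedly differentiable at the endpoint with derivative equal to that limit. This makes $f'$ continuous on all of $[0,T_c]$, so $f\in C^1[0,T_c]$ with $f'(0)=0$ and the stated formula for $f'(T_c)$. I expect the only genuine subtlety to be the boundary analysis, namely justifying that the limits may be taken inside the quotient (which relies on $\partial F/\partial Y$ staying bounded away from zero at the endpoints, guaranteed by Lemma~\ref{lm:FTFYW} and the negativity of $g$) and that these limits actually furnish the endpoint derivatives rather than merely controlling difference quotients.
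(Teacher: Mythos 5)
Your proposal is correct and takes essentially the same route as the paper: the paper's proof consists of the single assertion that Lemma \ref{lm:FC1onW} (continuity of $\partial F/\partial T$ and $\partial F/\partial Y$ on all of $W$, including the boundary pieces $W_2$ and $W_3$, together with the nonvanishing of $F_Y$ from Lemmas \ref{lm:FTFYW} and \ref{lm:FTFYminus}) immediately implies that $f$ is of class $C^1$ on $[0,\,T_c]$ with $f'(T)=-F_T(T,f(T))/F_Y(T,f(T))$, the endpoint values then being read off from this formula. Your interior implicit-function-theorem step, the endpoint limits of the quotient, and the mean-value-theorem upgrade of those limits to genuine one-sided derivatives are precisely the details the paper compresses into ``immediately implies,'' and your evaluations at $(0,\,\Delta^2)$ and $(T_c\,,\,0)$ reproduce the paper's stated values.
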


\begin{proof}
Lemma \ref{lm:FC1onW} immediately implies that the function $f$ is of class $C^1$ on the interval $[0,\,T_c]$ and that its derivative is given by
\begin{equation}\label{eq:f'(T)}
f'(T)=-\frac{\,F_T(T,\,f(T))\,}{\,F_Y(T,\,f(T))\,}\,.
\end{equation}
The values of $f'(0)$ and $f'(T_c)$ are derived from (\ref{eq:f'(T)}).
\end{proof}

Combining (\ref{eq:f'(T)}) with Lemma \ref{lm:FTFYminus} immediately yields the following.
\begin{lemma}
The function $f$ given by Lemma \ref{lm:existence} is monotonically decreasing on $[0,\,T_c]$:
\[
f(0)=\Delta^2>f(T_1)>f(T_2)>f(T_c)=0, \qquad 0<T_1<T_2<T_c\,.
\]
\end{lemma}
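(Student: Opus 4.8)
The plan is to read off the sign of $f'$ directly from the quotient formula (\ref{eq:f'(T)}) and then upgrade the resulting derivative sign to the strict inequalities claimed. First I would observe that for every $T$ with $0<T\leq T_c$ the point $(T,\,f(T))$ lies in $W\setminus W_2$: indeed $W_2$ consists only of points with $T=0$, whereas here $T>0$. More precisely, when $0<T<T_c$ the point sits in $W_1$ (by Lemma \ref{lm:existence} its second coordinate satisfies $0<f(T)<2\,\Delta_0^2$), and when $T=T_c$ it is $(T_c,\,0)\in W_3$ since $f(T_c)=0$. Hence Lemma \ref{lm:FTFYminus} applies at $(T,\,f(T))$ and gives both $F_T(T,\,f(T))<0$ and $F_Y(T,\,f(T))<0$.

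Substituting these two strict inequalities into (\ref{eq:f'(T)}) yields
\[
f'(T)=-\,\frac{\,F_T(T,\,f(T))\,}{\,F_Y(T,\,f(T))\,}<0 \qquad (0<T\leq T_c),
\]
since the ratio of two negative quantities is positive and is then negated. Thus $f'$ is strictly negative on the half-open interval $(0,\,T_c]$, even though $f'(0)=0$ by Lemma \ref{lm:derivative}.

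It remains to convert this into the strict chain $f(0)>f(T_1)>f(T_2)>f(T_c)$ for $0<T_1<T_2<T_c$. Since $f$ is of class $C^1$ on $[0,\,T_c]$ by Lemma \ref{lm:derivative}, I would invoke the mean value theorem on each subinterval: for $0\leq a<b\leq T_c$ there is a $c\in(a,\,b)$ with $f(b)-f(a)=f'(c)\,(b-a)$. In every such interval the intermediate point satisfies $c>0$, so $f'(c)<0$ by the previous step and therefore $f(b)<f(a)$; applying this to the successive pairs $(0,\,T_1)$, $(T_1,\,T_2)$ and $(T_2,\,T_c)$ produces the asserted strict monotonicity, while the endpoint values $f(0)=\Delta^2$ and $f(T_c)=0$ are already recorded in Lemma \ref{lm:existence}.

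The computation here is essentially immediate; the only point requiring a moment's care is the left endpoint, where $f'(0)=0$ means one cannot deduce a strict decrease from the derivative at $T=0$ alone. Using the mean value theorem on $(0,\,T_1)$—whose interior avoids the single exceptional point $T=0$—circumvents this cleanly, so I do not anticipate any real obstacle beyond being explicit about that endpoint.
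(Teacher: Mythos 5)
Your proposal is correct and follows essentially the same route as the paper: the paper's proof is precisely the observation that combining the derivative formula (\ref{eq:f'(T)}) with the sign information in Lemma \ref{lm:FTFYminus} gives $f'<0$ on $(0,\,T_c]$, hence strict monotonicity. Your additional care with the mean value theorem at the endpoint $T=0$ (where $f'(0)=0$) simply makes explicit what the paper leaves as ``immediate.''
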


Let $\phi$ be a function of $\eta$ and let $\eta$ be a function of $\xi$.
Set
\begin{equation}\label{eq:integralI}
I\left[ \,\phi(\eta) \,\right]=
\int_{2k_BT_c\,\varepsilon}^{\hslash\omega_D}\phi(\eta)\,d\xi\,.
\end{equation}

\begin{lemma}\label{lm:classc2}
Let $f$ be given by Lemma \ref{lm:existence} and let $I\left[ \cdot \right]$ be as in (\ref{eq:integralI}). Then the function $f$ is of class $C^2$ on $[0,\,T_c]$, and the second derivative $f''$ satisfies \quad $\displaystyle{ f''(0)=0 }$
\quad and
\begin{eqnarray}\nonumber
& & f''(T_c) \\ \nonumber
&=& 16\, k_B^2\,\frac{\,\displaystyle{
 I\left[ \,\frac{\,\eta_0\tanh \eta_0-1\,}{\cosh^2 \eta_0}\, \right]
}\,}{\, \displaystyle{
 I\left[ \,g(\eta_0)\, \right]  }\,}
-32\, k_B^2\,\frac{\,\displaystyle{
 I\left[ \,\frac{1}{\,\cosh^2 \eta_0\,}\, \right]
 I\left[ \,\frac{\,\tanh \eta_0\,}{\,\eta_0\cosh^2 \eta_0\,}\, \right]
 }\,}{\, \displaystyle{
 \left\{ \, I\left[ \,g(\eta_0)\, \right] \, \right\}^2  }\,} \\ \nonumber
& &\quad +8\, k_B^2\,\frac{\,\displaystyle{
 \left\{ I\left[ \,\frac{1}{\,\cosh^2 \eta_0\,}\, \right] \right\}^2
 I\left[ \,G(\eta_0)\, \right]
 }\,}{\, \displaystyle{
 \left\{ \, I\left[ \,g(\eta_0)\, \right] \, \right\}^3 }\,}\,,
\qquad \eta_0=\frac{\xi}{\,2k_BT_c\,}\,.
\end{eqnarray}
\end{lemma}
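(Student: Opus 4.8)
The plan is to obtain $f''$ on the open interval $(0,T_c)$ by differentiating the identity $F(T,f(T))\equiv 0$ twice, and then to analyse the two endpoints separately. On $(0,T_c)$ the curve $T\mapsto(T,f(T))$ lies in $W_1$, where $F$ is of class $C^2$ (Lemma \ref{lm:FC2W1}) with $F_{TY}=F_{YT}$ (Lemma \ref{lm:FTTexistence}), where $f$ is of class $C^1$ (Lemma \ref{lm:derivative}), and where $F_Y<0$ (Lemma \ref{lm:FTFYminus}). Differentiating $F(T,f(T))=0$ once gives $F_T+F_Yf'=0$; differentiating again and using the symmetry of the mixed partials gives $F_{TT}+2F_{TY}f'+F_{YY}(f')^2+F_Yf''=0$. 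Since $F_Y\neq 0$, this shows $f''$ exists and is continuous on $(0,T_c)$ with
\[
f''(T)=-\,\frac{F_{TT}+2F_{TY}f'+F_{YY}(f')^2}{F_Y}\,,
\]
the right-hand side being evaluated at $(T,f(T))$.

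Next I would show that $f''$ extends continuously to $T=T_c$ and compute the limit. As $T\uparrow T_c$ one has $f(T)\to f(T_c)=0$ (Lemma \ref{lm:existence}), so $(T,f(T))\to(T_c,0)$ and $\eta=\sqrt{\xi^2+f(T)}/(2k_BT)\to\eta_0=\xi/(2k_BT_c)$ uniformly for $\xi\in[2k_BT_c\varepsilon,\,\hslash\omega_D]$. Because $\xi\geq 2k_BT_c\varepsilon>0$, no singularity arises even at $f=0$, so each integrand in the formulas of Lemmas \ref{lm:FTFYW} and \ref{lm:FTTexistence} is continuous and bounded and the limit may be taken inside the integral; the prefactors converge likewise. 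Substituting the resulting limiting values of $F_{TT}$, $F_{TY}$, $F_{YY}$ and $F_Y$ at $(T_c,0)$, written in terms of the integrals $I[\,\cdot\,]$ with argument $\eta_0$, into the displayed formula, together with $f'(T_c)$ from Lemma \ref{lm:derivative}, and simplifying, produces exactly the stated expression for $f''(T_c)$. Since $f'$ is continuous on $[0,T_c]$ and $\lim_{T\uparrow T_c}f''(T)$ exists, a standard one-sided-limit argument then shows that the one-sided second derivative at $T_c$ exists and equals this limit.

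For the endpoint $T=0$ I would argue $f''(0)=0$. Here $f(T)\to f(0)=\Delta^2>0$, so $(T,f(T))\to(0,\Delta^2)$ and $\eta\to\infty$ pointwise. The exponential decay of $1/\cosh^2\eta$ and $\eta\tanh\eta/\cosh^2\eta$ as $\eta\to\infty$ overwhelms the $T^{-3}$ and $T^{-4}$ prefactors, giving $F_{TT}(T,f(T))\to 0$ and $F_{TY}(T,f(T))\to 0$. By Lemma \ref{lm:derivative}, $f'(T)\to f'(0)=0$; and the large-$\eta$ bound $|G(\eta)|=O(\eta^{-5})$, coming from the asymptotics of $g$ and $G$ in Lemmas \ref{lm:gproperty} and \ref{lm:cgproperty}, keeps $F_{YY}(T,f(T))$ bounded near $T=0$, so that $F_{YY}(f')^2\to 0$. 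Meanwhile $F_Y(T,f(T))\to F_Y(0,\Delta^2)<0$ by the $C^1$-continuity of $F$ on all of $W$ (Lemma \ref{lm:FC1onW}) and the explicit formula in Lemma \ref{lm:FTFYW}. Hence the numerator tends to $0$ while the denominator stays bounded away from $0$, giving $\lim_{T\downarrow 0}f''(T)=0$, and the same one-sided-limit argument yields $f''(0)=0$ and continuity of $f''$ at $0$.

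The main obstacle is the analysis at $T=0$. There the defining form of $F$ changes (the $\tanh$ disappears) and the integrands degenerate as $\eta\to\infty$, so one cannot simply pass to the limit inside the integral as at $T_c$. The delicate point is to track the competing rates precisely: the $\cosh^{-2}$-type terms decay exponentially in $1/T$ and therefore kill the negative powers of $T$ in front of $F_{TT}$ and $F_{TY}$, whereas $G$ decays only like $\eta^{-5}$, which is exactly enough to balance the $T^{-5}$ prefactor and keep $F_{YY}$ bounded. Establishing these asymptotics, and confirming that every term of the numerator vanishes while $F_Y$ remains nonzero, is where the real work lies; the extension to $T_c$ is, by comparison, routine.
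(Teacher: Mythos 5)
Your proof is correct, and while your interior argument coincides with the paper's, your treatment of the two endpoints---which is where the real work of this lemma lies---takes a genuinely different route. On $(0,\,T_c)$ your formula for $f''$ is the paper's (\ref{eq:f''}) after substituting $f'=-F_T/F_Y$, so that part is the same (one small point for a final write-up: rather than differentiating the identity a second time, which presupposes that $f''$ exists, note that $f'=-F_T(T,f(T))/F_Y(T,f(T))$ is differentiable because $F_T$ and $F_Y$ are $C^1$ on $W_1$ by Lemmas \ref{lm:FTTexistence} and \ref{lm:FC2W1} and $F_Y\neq 0$; the chain and quotient rules then give your formula). At the endpoints the paper proves one-sided differentiability of $f'$ by estimating difference quotients directly: at $T=0$ it bounds $\left|\,(f'(T)-f'(0))/T\,\right|$ by an explicitly exponentially small quantity, and at $T=T_c$ it expands $(f'(T_c)-f'(T))/(T_c-T)$ into three terms and applies the mean value theorem to $g$ and $\cosh$ inside the integrals, using $g'(\eta)=-\eta\,G(\eta)$ from Lemma \ref{lm:cgproperty} and an explicit formula for $\eta-\eta_0$, with the quotient $f(T)/(T_c-T)\to -f'(T_c)$ supplying the limit. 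You instead compute $\lim_{T\downarrow 0}f''(T)=0$ and $\lim_{T\uparrow T_c}f''(T)$ from the interior formula---passing to the limit under the integral sign is legitimate at $T_c$ because $\xi\geq 2k_BT_c\,\varepsilon>0$ makes $\eta\to\eta_0$ uniformly with continuous bounded integrands, and your bookkeeping at $T=0$ (exponential decay of the $\cosh^{-2}$ terms against the $T^{-3}$ and $T^{-4}$ prefactors, $G(\eta)=O(\eta^{-5})$ against the $T^{-5}$ prefactor, and $F_Y\to F_Y(0,\Delta^2)<0$ by Lemma \ref{lm:FC1onW}) is exactly right---and then you upgrade these limits to one-sided second derivatives by the mean value theorem applied to $f'$ itself on $[0,T]$ and $[T,T_c]$, which is available because $f'$ is continuous on the closed interval by Lemma \ref{lm:derivative}. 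This derivative-limit-theorem step is a real economy: the paper needs both the difference-quotient estimates (to get existence of $f''$ at the endpoints) and the limit computations (to get continuity of $f''$ there), whereas your route extracts existence and continuity simultaneously from the limit computations alone; what the paper's heavier route buys in exchange is explicit quantitative control of the difference quotients. I have checked that substituting the limiting values of $F_T$, $F_Y$, $F_{TT}$, $F_{TY}$, $F_{YY}$ at $(T_c,\,0)$ and $f'(T_c)$ into your formula does reproduce the stated three-term expression for $f''(T_c)$, so your concluding simplification is as claimed.
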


\begin{proof}
Lemma \ref{lm:FC2W1} implies that $f$ is of class $C^2$ on the open interval $(0,\,T_c)$ and that
\begin{equation}\label{eq:f''}
f''(T)=\frac{\,-F_{TT}F_Y^2+2F_{TY}F_TF_Y-F_{YY}F_T^2\,}{F_Y^3}\,,\qquad
0<T<T_c\,.
\end{equation}
So we have only to deal with $f$ and its derivatives at $T=0$ and at $T=T_c$.

\textit{Step 1}. We show that $f'$ is differentiable at $T=0$ and that
$f''$ is continuous at $T=0$.

Note that $f'(0)=0$ by Lemma \ref{lm:derivative}. Since $T$ is close to $T=0$, the inequality $f(T)>\Delta_0^2\, /2$ holds. It then follows from (\ref{eq:f'(T)}) and Lemma \ref{lm:FTFYW} that
\[
\left| \frac{\,f'(T)-f'(0)\,}{T} \right| \leq
4\,\frac{ \,\displaystyle{
 {\;\sqrt{\hslash^2\omega_D^2+2\,\Delta_0^2}\;}^3
 \exp\left(-\frac{\,\sqrt{\Delta_0^2/2}\,}{k_BT}\right) }
 \,}{ \displaystyle{ 
 k_BT^3\left( \tanh \eta_1-\frac{\eta_1}{\,\cosh^2\eta_1\,}\right) }
 } \to 0 \qquad (T\downarrow 0).
\]
Here, \  $\displaystyle{
\eta_1=\frac{\,\sqrt{\,\xi_1^2+f(T)\,}\,}{2k_BT} \to \infty }$ \  
as $T\downarrow 0$ \quad $(2k_BT_c\,\varepsilon<\xi_1<\hslash\omega_D)$.
Hence $f'$ is differentiable at $T=0$ and $f''(0)=0$.

By (\ref{eq:f''}), a similar argument gives \quad $\displaystyle{ 
\lim_{T\downarrow 0} f''(T)=0 }$. Hence $f''$ is continuous at $T=0$.

\textit{Step 2}. We show that $f'$ is differentiable at $T=T_c$ and that $f''$ is continuous at $T=T_c$.

Note that
\[
f'(T_c)=8\, k_B^2T_c\,\frac{\,\displaystyle{
 I\left[ \,\frac{1}{\,\cosh^2 \eta_0\,}\, \right]
}\,}{\,\displaystyle{ 
 I\left[ \,g(\eta_0)\, \right]
}\,}\,,
\qquad \eta_0=\frac{\xi}{\,2k_BT_c\,}
\]
by Lemma \ref{lm:derivative}. It follows from (\ref{eq:f'(T)}) and Lemma
\ref{lm:FTFYW} that
\[
f'(T)=8\, k_B^2T \,\frac{\,\displaystyle{
 I\left[ \,\frac{1}{\,\cosh^2 \eta\,}\, \right]
}\,}{\,\displaystyle{ I\left[ \,g(\eta)\, \right]
}\,}\,, \qquad \eta=\frac{\,\sqrt{\,\xi^2+f(T)\,}\,}{2k_BT}\,.
\]
Hence
\begin{eqnarray}\nonumber
& & \frac{\,f'(T_c)-f'(T)\,}{T_c-T} \\ \nonumber
&=&8\, k_B^2\,\frac{\,\displaystyle{
 I\left[ \,\frac{1}{\,\cosh^2 \eta_0\,}\, \right]  }\,}{\,\displaystyle{ 
 I\left[ \,g(\eta_0)\, \right]  }\,}+\frac{8\, k_B^2T}{\,T_c-T\,}
\frac{
 I\left[ \,\frac{1}{\,\cosh^2 \eta_0\,}\, \right] \left\{
 I\left[ \,g(\eta)\, \right]-I\left[ \,g(\eta_0)\, \right] \right\}
}{
I\left[ \,g(\eta_0)\, \right] I\left[ \,g(\eta)\, \right]
} \\ \nonumber
& & \quad +\frac{8\, k_B^2T}{\,T_c-T\,}\,
\frac{\,
 I\left[ \,g(\eta_0)\, \right] \left\{
 I\left[ \,\frac{1}{\,\cosh^2 \eta_0\,}\, \right]
 -I\left[ \,\frac{1}{\,\cosh^2 \eta\,}\, \right] \right\}
\,}{
I\left[ \,g(\eta_0)\, \right] \, I\left[ \,g(\eta)\, \right]
}\,.
\end{eqnarray}
Note that \quad $g(\eta)-g(\eta_0)=(\eta-\eta_0)g'(\eta_1)$ and
$\cosh\eta-\cosh\eta_0=(\eta-\eta_0)\sinh\eta_2$. Here,
\[
\eta_0=\frac{\xi}{\,2k_BT_c\,}<\eta_i<\eta=\frac{\,\sqrt{\,\xi^2+f(T)\,}\,}{2k_BT}\,,\qquad i=1,\,2
\]
and
\[
\eta-\eta_0=\frac{1}{\,2k_BT\,}\left\{
\frac{f(T)}{\, \sqrt{\,\xi^2+f(T)\,}+\xi \,}+\xi\frac{\,T_c-T\,}{T_c}
\right\}.
\]
Since $T$ is close to $T_c$, the inequality $T>T_c\, /2$ holds. Therefore,
by Lemma \ref{lm:cgproperty},
\[
\left| \frac{g'(\eta_1)}{\,\sqrt{\,\xi^2+f(T)\,}+\xi\,} \right| \leq
\frac{1}{\,k_BT_c\,}\,\max_{\eta\geq 0}\left| G(\eta) \right|
\]
and
\[
\left| \frac{\sinh\eta_2}{\,\sqrt{\,\xi^2+f(T)\,}+\xi\,} \right| \leq
\frac{1}{\,k_BT_c\,}\,\max_{0 \leq\eta\leq M}
 \left| \frac{\,\sinh\eta\,}{\eta} \right|,\qquad
M=\frac{\,\sqrt{ \hslash^2\omega_D^2+2\,\Delta_0^2 }\,}{k_BT_c}\,.
\]
So $f'$ is differentiable at $T=T_c$, and it is easy to see that the form of $f''(T_c)$ is exactly the same as that mentioned just above.

Furthermore, it follows from (\ref{eq:f''}) that $f''$ is continuous at
$T=T_c$.
\end{proof}

\section{Proof of Theorem \ref{thm:phasetransition}}

In this section we prove Theorem \ref{thm:phasetransition} in a sequence of lemmas. Fixing the values of the chemical potential $\mu$ and of
the volume of our physical system, we deal with the dependence of the thermodynamical potential $\Omega$ on the temperature $T$ only.

\begin{lemma}\label{lm:v}
Let $V$ be as in \eqref{eq:v}. Then $V$ is of class $C^2$ on
$(0,\,\infty)$.
\end{lemma}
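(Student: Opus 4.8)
The goal is to prove that $V$, defined by \eqref{eq:v}, is of class $C^2$ on $(0,\infty)$. The function $V$ is a sum of three terms; the first is a constant, so only the second and third terms require attention. Both of these are integrals involving the temperature $T$ as a parameter, with integrands of the form $k_BT\,N(\xi)\ln(1+e^{\pm\xi/(k_BT)})$, integrated against the density of states $N(\xi)$ over fixed regions of $\xi$.

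The plan is to apply the standard theorem on differentiation under the integral sign, twice. First I would fix a compact subinterval $[a,b]\subset(0,\infty)$ and show that for each of the two integral terms, the integrand and its partial derivatives with respect to $T$ (up to second order) are continuous in $(T,\xi)$ and admit $T$-independent integrable majorants on $[a,b]$. For the finite integral $\int_{-\mu}^{-\hslash\omega_D}$ this is routine since the domain is compact and $N(\xi)$ is bounded there, so the integrand and its $T$-derivatives are jointly continuous and bounded. The verification then reduces to computing $\partial_T$ and $\partial_T^2$ of $k_BT\ln(1+e^{\xi/(k_BT)})$ and checking these are continuous; the resulting expressions involve $e^{\xi/(k_BT)}/(1+e^{\xi/(k_BT)})$ and similar bounded factors, so continuity and boundedness on $[a,b]\times[-\mu,-\hslash\omega_D]$ are immediate.

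The main obstacle lies in the third term, the improper integral $\int_{\hslash\omega_D}^{\infty}N(\xi)\ln(1+e^{-\xi/(k_BT)})\,d\xi$, because the domain is unbounded and $N(\xi)=O(\sqrt{\xi})$ grows. Here I would exploit the exponential decay: for $\xi\geq\hslash\omega_D$ and $T\in[a,b]$ one has $\ln(1+e^{-\xi/(k_BT)})\leq e^{-\xi/(k_BT)}\leq e^{-\xi/(k_Bb)}$, and after differentiating once or twice in $T$ the integrand acquires at worst polynomial factors in $\xi/T$ multiplying $e^{-\xi/(k_BT)}$. Combined with $N(\xi)=O(\sqrt{\xi})$, each such integrand is dominated uniformly on $[a,b]$ by an expression of the form $C\,\xi^{p}\,e^{-\xi/(k_Bb)}$, which is integrable on $[\hslash\omega_D,\infty)$. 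This uniform integrable majorant justifies passing $\partial_T$ and $\partial_T^2$ inside the integral and guarantees the resulting derivative integrals are themselves continuous in $T$.

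Having established, for both integral terms, that differentiation under the integral sign is valid up to second order and that the differentiated integrals depend continuously on $T$, I conclude that $V''$ exists and is continuous on each compact $[a,b]\subset(0,\infty)$, hence on all of $(0,\infty)$. Therefore $V$ is of class $C^2$ on $(0,\infty)$. The only genuinely delicate point is the uniform domination on the unbounded domain, and that is handled entirely by the exponential factor swamping the polynomial growth of $N(\xi)$ and of the $T$-derivative prefactors; everything else is a routine continuity-and-boundedness check.
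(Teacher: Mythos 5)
Your proposal is correct and takes essentially the same route as the paper: both proofs justify differentiating under the integral sign twice by exhibiting $T$-independent integrable majorants for the $T$-derivatives of the integrands, locally uniformly in $T$, with the exponential factor $e^{-|\xi|/(k_B T)}$ overwhelming the $O(\sqrt{\xi})$ growth of $N(\xi)$ on $[\hslash\omega_D,\infty)$. The only cosmetic difference is that you localize to a compact interval $[a,b]\subset(0,\infty)$, whereas the paper brackets each $T$ between $\theta_1 T_c$ and $\theta_2 T_c$ with $0<\theta_1<1<\theta_2$; these are the same device.
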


\begin{proof}
For each $T>0$, there are a $\theta_1$ \   $(0<\theta_1<1)$ and a $\theta_2$ \   $(\theta_2>1)$ satisfying $\theta_1T_c<T<\theta_2T_c$. Then
\[
\left| \frac{\partial}{\,\partial T\,} \ln
 \left( 1+e^{\displaystyle{-|\xi|/(k_BT)}} \right) \right|\leq
\frac{\, |\xi| \, e^{\displaystyle{-|\xi|/(k_B\theta_2T_c)}}\,}
{k_B\theta_1^2T_c^2},
\]
where the right side is integrable on $[-\mu,\, -\hslash\omega_D]$ and on $[\hslash\omega_D,\, \infty)$ since $N(\xi)=O(\sqrt{\xi})$ as $\xi \to \infty$ (see Remark \ref{rmk:nxi}). Hence $V$ is differentiable on $(0,\,\infty)$ and
\begin{eqnarray}\nonumber
\frac{\,\partial V\,}{\partial T}(T)&=&-2k_B
\int_{[-\mu,\,-\hslash\omega_D]\,\cup\,[\hslash\omega_D,\,\infty)}
 N(\xi)\ln\left( 1+e^{\displaystyle{-|\xi|/(k_BT)}} \right)\,d\xi
 \\  \nonumber
& &\quad -\frac{2}{\, T\,}
\int_{[-\mu,\,-\hslash\omega_D]\,\cup\,[\hslash\omega_D,\,\infty)}
 N(\xi)\,\frac{|\xi|}{\,1+e^{\displaystyle{|\xi|/(k_BT)}}\,}\,d\xi.
\end{eqnarray}

A similar argument gives
\[
\left| \frac{\partial}{\,\partial T\,}\,
 \frac{1}{\,1+e^{\displaystyle{|\xi|/(k_BT)}}\,} \right|\leq
\frac{\, |\xi| \, e^{\displaystyle{-|\xi|/(k_B\theta_2T_c)}}\,}
{k_B\theta_1^2T_c^2},
\]
where the right side is integrable on $[-\mu,\, -\hslash\omega_D]$ and on $[\hslash\omega_D,\, \infty)$. Therefore, $(\partial V/\partial T)$
is again differentiable on $(0,\,\infty)$ and
\[
\frac{\,\partial^2 V\,}{\,\partial T^2\,}(T)=-\frac{2}{\,k_BT^3\,}
\int_{[-\mu,\,-\hslash\omega_D]\,\cup\,[\hslash\omega_D,\,\infty)}
 N(\xi)\,\frac{|\xi|^2\, e^{\displaystyle{|\xi|/(k_BT)}}}
 {\,\left( 1+e^{\displaystyle{|\xi|/(k_BT)}} \right)^2\,}\,d\xi.
\]
Clearly, $(\partial^2 V/\partial T^2)$ is continuous on $(0,\,\infty)$.
\end{proof}

\begin{lemma}\label{lm:OmegaN}
Let $\Omega_N$ be as in \eqref{eq:omegan}. Then $\Omega_N$ is of
class $C^2$ on $(0,\,\infty)$.
\end{lemma}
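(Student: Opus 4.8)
The plan is to decompose $\Omega_N$ into its three summands and treat each separately, exploiting that the first term is constant in $T$, the last term is already handled by the previous lemma, and the middle term involves only an integral over a \emph{bounded} interval. First I would observe that the term $-2N_0\int_{2k_BT_c\,\varepsilon}^{\hslash\omega_D}\xi\,d\xi$ is independent of $T$, hence trivially of class $C^\infty$ on $(0,\,\infty)$, and that $V(T)$ is of class $C^2$ on $(0,\,\infty)$ by Lemma \ref{lm:v}. Thus it remains only to show that the middle term
\[
M(T)=-4N_0k_BT\int_{2k_BT_c\,\varepsilon}^{\hslash\omega_D}
\ln\left( 1+e^{-\xi/(k_BT)} \right)\,d\xi
\]
is of class $C^2$ on $(0,\,\infty)$.

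Writing $\displaystyle{J(T)=\int_{2k_BT_c\,\varepsilon}^{\hslash\omega_D}\ln\left( 1+e^{-\xi/(k_BT)} \right)\,d\xi}$, so that $M(T)=-4N_0k_B\,T\,J(T)$, I would show that $J$ is of class $C^2$ by differentiating under the integral sign twice, mirroring the mechanism of Lemma \ref{lm:v}. The key point is that, for each fixed $T>0$, one may confine $T$ to a compact subinterval $[\,\theta_1T_c,\,\theta_2T_c\,]$ with $0<\theta_1<\theta_2$; on this subinterval the first and second $T$-derivatives of the integrand $\xi\mapsto\ln( 1+e^{-\xi/(k_BT)})$ are continuous and bounded uniformly in $\xi\in[\,2k_BT_c\,\varepsilon,\,\hslash\omega_D\,]$ by a constant (for instance $\partial_T\ln(1+e^{-\xi/(k_BT)})=\xi\,/(k_BT^2)\cdot\{1+e^{\xi/(k_BT)}\}^{-1}$ is visibly bounded since $\xi$ ranges over a bounded set and $T\geq\theta_1T_c>0$). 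Because the integration domain is the \emph{finite} interval $[\,2k_BT_c\,\varepsilon,\,\hslash\omega_D\,]$, any such constant bound is automatically integrable, so the hypotheses for differentiation under the integral sign are satisfied and the resulting $J'(T)$ and $J''(T)$ are continuous on $(0,\,\infty)$. Finally, $M(T)=-4N_0k_B\,T\,J(T)$ is the product of the $C^\infty$ function $T$ with the $C^2$ function $J(T)$, hence is of class $C^2$ by the Leibniz rule, and summing the three summands yields that $\Omega_N$ is of class $C^2$ on $(0,\,\infty)$.

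Unlike Lemma \ref{lm:v}, there is essentially no analytic obstacle here: the only mild care needed is the explicit factor of $T$ multiplying the integral, which forces a routine application of the product rule when assembling $M'$ and $M''$ from $J$, $J'$ and $J''$. The absence of an unbounded integration domain means that the delicate uniform-integrability estimates invoking $N(\xi)=O(\sqrt{\xi})$, which were the crux of Lemma \ref{lm:v}, are entirely unnecessary in the present argument.
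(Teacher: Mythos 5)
Your proposal is correct and follows essentially the same route as the paper: the paper also reduces the claim to ``an argument similar to that in the proof of Lemma \ref{lm:v}'', i.e.\ differentiation under the integral sign on the finite interval $[\,2k_BT_c\,\varepsilon,\,\hslash\omega_D\,]$ combined with Lemma \ref{lm:v} for the $V(T)$ term, and then records the resulting formulas for $\partial\Omega_N/\partial T$ and $\partial^2\Omega_N/\partial T^2$, which agree with what your product-rule computation with $J$, $J'$, $J''$ yields. Your added observations (the $T$-independent first term, the automatic integrability of constant bounds on a bounded domain) are consistent with, and slightly more explicit than, the paper's argument.
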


\begin{proof}
An argument similar to that in the proof of Lemma \ref{lm:v} gives that
$\Omega_N$ is of class $C^2$ on $(0,\,\infty)$ and that the derivatives
are given by
\begin{eqnarray}\nonumber
\frac{\,\partial \Omega_N\,}{\,\partial T\,}(T)&=&-4N_0k_B
\int_{\displaystyle{2k_BT_c\,\varepsilon}}^{\displaystyle{\hslash\omega_D}} \ln\left( 1+e^{\displaystyle{-\xi/(k_BT)}} \right)\,d\xi \\ \nonumber
& &\quad -\frac{\, 4N_0\,}{T}
\int_{\displaystyle{2k_BT_c\,\varepsilon}}^{\displaystyle{\hslash\omega_D}} \frac{\xi}{\,1+e^{\displaystyle{\xi/(k_BT)}}\,}\,d\xi
 +\frac{\,\partial V\,}{\partial T}(T), \\ \nonumber
\frac{\,\partial^2 \Omega_N\,}{\,\partial T^2\,}(T)
&=&-\frac{\, 4N_0\,}{\,k_BT^3\,}
\int_{\displaystyle{2k_BT_c\,\varepsilon}}^{\displaystyle{\hslash\omega_D}}\frac{\xi^2\,e^{\displaystyle{\xi/(k_BT)}}}
{\,\left( 1+e^{\displaystyle{\xi/(k_BT)}}\right)^2\,}\,d\xi
+\frac{\,\partial^2 V\,}{\,\partial T^2\,}(T).
\end{eqnarray}
\end{proof}

\begin{lemma}\label{lm:delta}
Let $\delta$ be as in \eqref{eq:delta}. Then $\delta$ is of
class $C^2$ on $(0,\,T_c]$.
\end{lemma}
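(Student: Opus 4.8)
The plan is to realize $\delta$ as the restriction to the graph of $f$ of a function of two independent variables, and then to differentiate by the chain rule. Concretely, introduce
\[
D(T,\,Y)=\frac{\,Y\,}{U_0}-2N_0\int_{2k_BT_c\,\varepsilon}^{\hslash\omega_D}\left\{\sqrt{\xi^2+Y}-\xi\right\}d\xi-4N_0k_BT\int_{2k_BT_c\,\varepsilon}^{\hslash\omega_D}\ln\frac{1+e^{-\sqrt{\xi^2+Y}/(k_BT)}}{1+e^{-\xi/(k_BT)}}\,d\xi,
\]
so that $\delta(T)=D(T,\,f(T))$ for $0<T\leq T_c$. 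Since the solution $f$ is already known to be of class $C^2$ on $[0,\,T_c]$, with values in $[0,\,\Delta^2]\subset[0,\,2\Delta_0^2)$ (Lemma \ref{lm:classc2}), it suffices to prove that $D$ is jointly of class $C^2$ in $(T,\,Y)$ on the region $\{0<T\leq T_c,\;0\leq Y<2\Delta_0^2\}$; the composite $T\mapsto D(T,\,f(T))$ is then automatically of class $C^2$ on $(0,\,T_c]$.

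First I would establish that $D$ is jointly $C^2$ by differentiating under the integral sign exactly as in the proofs of Lemmas \ref{lm:v} and \ref{lm:OmegaN}, the task here being in fact simpler because the $\xi$-integration runs over the \emph{compact} interval $[2k_BT_c\,\varepsilon,\,\hslash\omega_D]$, with no tail at infinity. The crucial point is that $2k_BT_c\,\varepsilon>0$, so that $\sqrt{\xi^2+Y}\geq 2k_BT_c\,\varepsilon>0$ throughout the domain of integration; consequently the integrands $\sqrt{\xi^2+Y}-\xi$ and $\ln\bigl(1+e^{-\sqrt{\xi^2+Y}/(k_BT)}\bigr)$, together with all their partial derivatives in $T$ and $Y$ up to second order, are continuous in $(T,\,Y,\,\xi)$ and bounded uniformly in $\xi$ on each compact subset of $\{T>0,\;Y\geq 0\}$. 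No singularity arises even on the slice $Y=0$. Thus the standard domination argument lets me pass $\partial_T$, $\partial_Y$ and their second-order combinations under the integral, yielding continuous second-order partials of $D$, whence $D\in C^2$.

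Next I would invoke the chain rule. Because $D$ is jointly $C^2$ on a neighbourhood of the graph of $f$ that is open in the variable $T$ (each $T_0\in(0,\,T_c]$ satisfies $T_0>0$, and near $T=T_c$ the function $D$ is smooth on both sides), and $f$ is $C^2$ on $[0,\,T_c]$ with the appropriate one-sided derivatives at $T_c$, the composite $\delta=D(\,\cdot\,,f(\,\cdot\,))$ inherits class $C^2$ on $(0,\,T_c]$, including at the endpoint $T=T_c$. I would also record the identity
\[
\frac{\,\partial D\,}{\,\partial Y\,}(T,\,Y)=\frac{1}{U_0}-N_0\int_{2k_BT_c\,\varepsilon}^{\hslash\omega_D}\frac{1}{\,\sqrt{\xi^2+Y}\,}\tanh\frac{\,\sqrt{\xi^2+Y}\,}{2k_BT}\,d\xi=-N_0\,F(T,\,Y),
\]
which follows from $\frac{1-e^{x}}{1+e^{x}}=-\tanh\frac{x}{2}$ and the definition \eqref{eq:functionF} of $F$. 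Along the graph $Y=f(T)$ one has $F(T,\,f(T))=0$, so $(\partial D/\partial Y)(T,\,f(T))=0$ and hence $\delta'(T)=(\partial D/\partial T)(T,\,f(T))$, with the term involving $f'(T)$ dropping out. This identity is not needed for the $C^2$ assertion itself, but it is the natural workhorse for Theorem \ref{thm:phasetransition}.

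The main obstacle is technical rather than conceptual: it lies in the uniform domination needed to differentiate under the integral near the endpoint $T=T_c$, where $f(T_c)=0$ pushes the evaluation point onto the boundary slice $Y=0$. This is defused by the observation that $\xi$ is bounded away from zero on the interval of integration, so that no term such as $1/\sqrt{\xi^2+Y}$ can blow up; every integrand and every second-order partial remains bounded on a fixed compact $(T,\,Y)$-neighbourhood of $(T_c,\,0)$ with $T>0$, and the dominating functions may be chosen independently of $(T,\,Y)$ there.
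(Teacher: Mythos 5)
Your proposal is correct, and it reaches the paper's conclusion by the same underlying technique (differentiation under the integral sign over the compact interval $[2k_BT_c\,\varepsilon,\,\hslash\omega_D]$, where $\xi$ bounded away from zero makes all domination trivial), but with a genuinely different organization. The paper never introduces your two-variable function $D$: it differentiates the composite $T\mapsto\delta(T)$ directly, so its dominating functions involve the gap function itself, e.g. the bound $\bigl|\partial_T\sqrt{\xi^2+f(T)}\bigr|\leq \max_{0\leq T\leq T_c}|f'(T)|/(4k_BT_c\,\varepsilon)$, and the smoothness of $f$ from Lemma \ref{lm:classc2} enters inside the integrands. In the resulting formula for $\partial\delta/\partial T$ the coefficient of $f'(T)$ is observed to vanish by the gap equation \eqref{eq:gapequation}, which is exactly your identity $(\partial D/\partial Y)(T,\,f(T))=-N_0F(T,\,f(T))=0$ in disguise; the paper then differentiates once more and reads off $(\partial^2\delta/\partial T^2)(T_c)$ explicitly, which is what Theorem \ref{thm:phasetransition} consumes via \eqref{eq:delta''Tc}. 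Your factorization $\delta=D(\,\cdot\,,f(\,\cdot\,))$ buys modularity: joint $C^2$ (indeed $C^\infty$) smoothness of $D$ on an open set containing the graph of $f$ (open because $\varepsilon>0$ lets $Y$ even dip slightly below $0$), plus $f\in C^2[0,\,T_c]$, plus the chain rule, settle the $C^2$ claim with no domination involving $f'$ and no boundary delicacy at $(T_c,\,0)$. What it does not by itself deliver is the explicit value of $\delta''(T_c)$; to feed Theorem \ref{thm:phasetransition} you would still need to carry out the computation of $(\partial^2 D/\partial T^2)(T_c,\,0)$ (and note that the cross term $(\partial^2 D/\partial T\partial Y)(T_c,\,0)\,f'(T_c)$ contributes), which is precisely the calculation the paper's more pedestrian route produces as a by-product.
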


\begin{proof}
Note that the squared gap function $f$ is of class $C^2$ on $[0,\,T_c]$ by Lemma \ref{lm:classc2} and that
\begin{equation}\label{eq:deltaTc}
\delta(T_c)=0
\end{equation}
since $f(T_c)=0$ (see Lemma \ref{lm:existence}). A straightforward calculation gives that $\delta$ is continuous on $(0,\,T_c]$ and that
\[
\left| \frac{\partial}{\,\partial T\,} \sqrt{\xi^2+f(T)} \right| \leq
\frac{\,\displaystyle{\max_{0 \leq T \leq T_c} \left| f'(T) \right|}\,}
 {\,4k_BT_c\,\varepsilon\,},
\]
where the right side is integrable on $[2k_BT_c\,\varepsilon,\, \hslash\omega_D]$. By an argument similar to that in the proof of Lemma \ref{lm:v}, $\delta$ is differentiable on $(0,\,T_c]$ and the derivative is given by
\begin{eqnarray}\nonumber
\frac{\,\partial \delta\,}{\,\partial T\,}(T)&=&f'(T)
\left\{ \frac{1}{\,U_0\,}-N_0
\int_{\displaystyle{2k_BT_c\,\varepsilon}}^{\displaystyle{\hslash\omega_D}}\frac{1}{\,\sqrt{\,\xi^2+f(T)\,}\,}
\tanh\frac{\, \sqrt{\,\xi^2+f(T)\,}\,}{2k_BT}\,d\xi \right\}\\ \nonumber
& &\quad -4N_0k_B
\int_{\displaystyle{2k_BT_c\,\varepsilon}}^{\displaystyle{\hslash\omega_D}}\ln \frac{\,1+e^{-\displaystyle{\sqrt{\xi^2+f(T)}/(k_BT)}}\,}
{1+e^{-\displaystyle{\xi/(k_BT)}}} \,d\xi \\ \nonumber
& &\quad +\frac{\, 4N_0\,}{T}
\int_{\displaystyle{2k_BT_c\,\varepsilon}}^{\displaystyle{\hslash\omega_D}}\left\{ \frac{\xi}{\,1+e^{\displaystyle{\xi/(k_BT)}}\,}-
\frac{\sqrt{\xi^2+f(T)}}
{\,1+e^{\displaystyle{\sqrt{\xi^2+f(T)}/(k_BT)}}\,} \right\}\,d\xi,
\end{eqnarray}
where the first term on the right side is equal to 0 by the gap equation (\ref{eq:gapequation}).
Note that
\begin{equation}\label{eq:delta'Tc}
\frac{\,\partial \delta\,}{\,\partial T\,}(T_c)=0.
\end{equation}

An argument similar to that in the proof of Lemma \ref{lm:v} gives that
$(\partial \delta/\partial T)$ is again differentiable on $(0,\,T_c]$ and
the second-order derivative is given by
\begin{eqnarray}\nonumber
& &\frac{\,\partial^2 \delta\,}{\,\partial T^2\,}(T) \\ \nonumber
&=&\frac{4N_0}{\,k_BT^3\,}
\int_{\displaystyle{2k_BT_c\,\varepsilon}}^{\displaystyle{\hslash\omega_D}}\frac{\xi^2\, e^{\displaystyle{\xi/(k_BT)}}}
{\,\left( 1+e^{\displaystyle{\xi/(k_BT)}} \right)^2\,}\,d\xi \\ \nonumber
& &-\frac{4N_0}{\,k_BT^3\,}
\int_{\displaystyle{2k_BT_c\,\varepsilon}}^{\displaystyle{\hslash\omega_D}}\frac{ e^{\displaystyle{\sqrt{\xi^2+f(T)}/(k_BT)}} }
{\,\left( 1+e^{\displaystyle{\sqrt{\xi^2+f(T)}/(k_BT)}}\right)^2\,}
\left\{ \xi^2+f(T)-\frac{\,Tf'(T)\,}{2} \right\} \,d\xi,
\end{eqnarray}
which is also continuous on $(0,\,T_c]$. Thus $\delta$ is of class $C^2$ on $(0,\,T_c]$, and
\begin{equation}\label{eq:delta''Tc}
\frac{\,\partial^2 \delta\,}{\,\partial T^2\,}(T_c)
=\frac{\,2N_0f'(T_c)\,}{T_c} \left(
\frac{1}{\,1+e^{\displaystyle{2\varepsilon}} \,}-
\frac{1}{\,1+e^{\displaystyle{\hslash\omega_D/(k_BT_c)}} \,} \right).
\end{equation}
\end{proof}

We now give a proof of Theorem \ref{thm:phasetransition}.
\begin{lemma}
Let $f'(T_c)$ be given by Lemma \ref{lm:derivative} and let $\Omega$ be the thermodynamical potential given by Definition \ref{dfn:thpo}.

{\rm (i)}\quad The thermodynamical potential $\Omega$, regarded as
a function of $T$, is of class $C^1$ on $(0,\,\infty)$.

{\rm (ii)}\quad The second-order derivative
$\left( \partial^2\Omega/\partial T^2\right)$ is continuous on
$(0,\,\infty) \setminus \{ T_c\}$.

{\rm (iii)} \[
\lim_{T\uparrow T_c}\frac{\,\partial^2\Omega\,}{\,\partial T^2\,}(T)-
\lim_{T\downarrow T_c} \frac{\,\partial^2\Omega\,}{\,\partial T^2\,}(T)
=\frac{\,2N_0f'(T_c)\,}{T_c} \left(
\frac{1}{\,1+e^{\displaystyle{2\varepsilon}} \,}-
\frac{1}{\,1+e^{\displaystyle{\hslash\omega_D/(k_BT_c)}} \,} \right).
\]
\end{lemma}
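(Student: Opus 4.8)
The plan is to reduce everything to the regularity of the two pieces and to the boundary behavior of $\delta$ at $T=T_c$, so that the only real analysis happens at the single junction point. On $(T_c,\infty)$ we have $\Omega=\Omega_N$, which is of class $C^2$ by Lemma \ref{lm:OmegaN}; on $(0,T_c)$ we have $\Omega=\Omega_N+\delta$, which is of class $C^2$ by Lemmas \ref{lm:OmegaN} and \ref{lm:delta}. Consequently $\Omega$ is automatically $C^2$ on $(0,\infty)\setminus\{T_c\}$, and in particular $(\partial^2\Omega/\partial T^2)$ is continuous there; this already disposes of statement (ii). All the remaining work concerns the matching of the one-sided limits of $\Omega$, $\Omega'$ and $\Omega''$ across $T_c$.

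For (i), I would compute the one-sided limits at $T_c$. The right-hand side is governed by $\Omega_N$ alone, which is smooth across $T_c$, so $\lim_{T\downarrow T_c}\Omega=\Omega_N(T_c)$ and $\lim_{T\downarrow T_c}\Omega'=\Omega_N'(T_c)$. From the left the extra term $\delta$ enters, but here the key facts proved in Lemma \ref{lm:delta} take over: by \eqref{eq:deltaTc} and \eqref{eq:delta'Tc} we have $\delta(T_c)=0$ and $(\partial\delta/\partial T)(T_c)=0$. Hence $\lim_{T\uparrow T_c}\Omega=\Omega_N(T_c)+\delta(T_c)=\Omega_N(T_c)$ and $\lim_{T\uparrow T_c}\Omega'=\Omega_N'(T_c)+\delta'(T_c)=\Omega_N'(T_c)$. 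The one-sided limits agree on both counts, so $\Omega$ and $\Omega'$ are continuous at $T_c$, which together with the $C^2$ regularity on each piece yields $\Omega\in C^1(0,\infty)$ and proves (i).

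For (iii), the same decomposition gives the jump directly. From the right $\lim_{T\downarrow T_c}(\partial^2\Omega/\partial T^2)=\Omega_N''(T_c)$, while from the left $\lim_{T\uparrow T_c}(\partial^2\Omega/\partial T^2)=\Omega_N''(T_c)+(\partial^2\delta/\partial T^2)(T_c)$. Subtracting cancels the common $\Omega_N''(T_c)$ and leaves exactly $(\partial^2\delta/\partial T^2)(T_c)$, which by \eqref{eq:delta''Tc} equals the asserted expression $\frac{2N_0f'(T_c)}{T_c}\bigl(\frac{1}{1+e^{2\varepsilon}}-\frac{1}{1+e^{\hslash\omega_D/(k_BT_c)}}\bigr)$. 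Since $f'(T_c)<0$ by Proposition \ref{prp:gapfunction} and the two fractions are distinct because $2\varepsilon<\hslash\omega_D/(k_BT_c)$, this jump is nonzero; thus $(\partial^2\Omega/\partial T^2)$ is genuinely discontinuous at $T_c$, confirming that the transition is second-order and not smoother.

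The substantive difficulty is not in this final lemma at all but lies upstream, in Lemma \ref{lm:delta}: establishing $\delta(T_c)=0$, $\delta'(T_c)=0$ and the explicit value of $\delta''(T_c)$, which in turn rest on the $C^2$ regularity of the squared gap function $f$ and on the value $f'(T_c)$ from Lemma \ref{lm:classc2}. Once those boundary values are in hand, the present statement is essentially bookkeeping of one-sided limits; the one point requiring a moment's care is confirming that the jump $\delta''(T_c)$ does not vanish, so that the transition is of second order rather than of higher regularity.
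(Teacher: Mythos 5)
Your proposal is correct and follows essentially the same route as the paper: both reduce everything to the regularity of $\Omega_N$ and $\delta$ (Lemmas \ref{lm:v}, \ref{lm:OmegaN}, \ref{lm:delta}), use $\delta(T_c)=(\partial\delta/\partial T)(T_c)=0$ from \eqref{eq:deltaTc} and \eqref{eq:delta'Tc} to match the one-sided limits for (i) and (ii), and identify the jump in (iii) with $(\partial^2\delta/\partial T^2)(T_c)$ as given by \eqref{eq:delta''Tc}. Your added observation that the jump is nonzero (since $f'(T_c)<0$ and $2\varepsilon<\hslash\omega_D/(k_BT_c)$) is a worthwhile explicit confirmation of what the paper leaves to a following remark, but otherwise the arguments coincide.
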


\begin{proof}
Note that \  $\displaystyle{
\delta(T_c)=(\partial \delta/\partial T)(T_c)=0 }$ \  
(see (\ref{eq:deltaTc}) and (\ref{eq:delta'Tc})). Hence both (i) and (ii)
follow immediately from Lemmas \ref{lm:v}, \ref{lm:OmegaN} and
\ref{lm:delta}. Since
\[
\lim_{T\uparrow T_c}\left( \partial^2\Omega/\partial T^2 \right)(T)-
\lim_{T\downarrow T_c}\left( \partial^2\Omega/\partial T^2 \right)(T)
=(\partial^2 \delta/\partial T^2)(T_c),
\]
(iii) follows immediately from (\ref{eq:delta''Tc}).
\end{proof}

\begin{remark}
This lemma implies that the transition to a superconducting state at the transition temperature $T_c$ is a second-order phase transition.
\end{remark}


\end{document}